%
%
%
%

\documentclass{article}

\setlength{\textwidth}{6in} 
\setlength{\evensidemargin}{-.1in} 
\setlength{\oddsidemargin}{-.1in}

\usepackage{amssymb,amsmath,latexsym,amscd,amsfonts,bbm, enumerate}  
\usepackage{graphicx}  
\usepackage[usenames, pdftex, dvipsnames]{color}

\newtheorem{theorem}{Theorem}[section]

\newtheorem{deff}[theorem]{Definition}

\newtheorem{lem}[theorem]{Lemma}

\newtheorem{corol}[theorem]{Corollary}

 \newcommand{\qedsymb}{\hfill{\rule{2mm}{2mm}}}  
 \newenvironment{proof}[1][]{\begin{trivlist}  
 \item[\hspace{\labelsep}{\bf\noindent Proof#1:\/}] 
 }{\qedsymb\end{trivlist}}

\newcommand{\ignore}[1]{}  

\newcommand{\inote}[1]{}
\newcommand{\znote}[1]{}
\newcommand{\unote}[1]{}



\newcommand{\NP}{\mathsf{NP}}

\newcommand{\norm}[1]{{\| #1 \|}}  
  
\newcommand{\ket}[1]{{ |{#1} \rangle }}  
\newcommand{\bra}[1]{{ \langle {#1} | }}
\newcommand{\braket}[2]{{ \langle {#1} | {#2} \rangle}}

\newcommand{\orderof}[1]{\mathcal{O}(#1)} 
\newcommand{\bigOmega}[1]{\Omega\mathopen{}\left(#1\right)}
\newcommand{\torderof}[1]{\tilde{\mathcal{O}}(#1)} 
\newcommand{\poly}{\mathrm{poly}} 
\newcommand{\EqDef}{\stackrel{\mathrm{def}}{=}}


\newcommand{\Fig}[1]{Fig.~\ref{#1}}

\newcommand{\Ref}[1]{Ref.~\cite{#1}}

\newcommand{\Thm}[1]{Theorem~\ref{#1}}

\newcommand{\Id}{\mathbbm{1}}


\newcommand{\gs}{\Gamma}
\newcommand{\gsp}{\Gamma^\perp}

\newcommand{\ER}[1]{\mathrm{ER}(#1)}

\newcommand{\Hip}{\mathcal{H^\perp}}

\newcommand{\eps}{\epsilon}
\newcommand{\C}{\mathbb{C}}

\newcommand{\Ht}{H^{(t)}}
\begin{document}
\begin{center}
  {\Large An area law and sub-exponential algorithm for 1D systems} 
\end{center}

\begin{center}
  Itai Arad\footnote{The Hebrew University}, 
  Alexei Kitaev\footnote{California Institute of Technology}, 
  Zeph Landau\footnote{UC Berkeley}, 
  Umesh Vazirani\footnote{UC Berkeley}
\end{center}

\begin{abstract}
  We give a new proof for the area law for general 1D gapped
  systems, which exponentially improves Hastings' famous
  result~\cite{ref:Has07}. Specifically, we show that for a chain of
  $d$-dimensional spins, governed by a 1D local Hamiltonian with a
  spectral gap $\eps>0$, the entanglement entropy of the ground
  state with respect to any cut in the chain is upper bounded by
  $\orderof{\frac{\log^3 d}{\eps}}$. Our approach uses the framework
  of~Refs.~\cite{ref:FOCS2011-AreaLaw,ref:DL-AreaLaw-2011} to
  construct a Chebyshev-based AGSP (Approximate Ground Space
  Projection) with favorable factors. However, our construction uses
  the Hamiltonian directly, instead of using the Detectability
  lemma, which allows us to work with general (frustrated)
  Hamiltonians, as well as slightly improving the $1/\eps$
  dependence of the bound in~\Ref{ref:DL-AreaLaw-2011}. To achieve
  that, we establish a new, ``random-walk like'', bound on the
  entanglement rank of an arbitrary power of a 1D Hamiltonian, which
  might be of independent interest: $\ER{H^\ell} \le (\ell
  d)^{\orderof{\sqrt{\ell}}}$. Finally, treating $d$ as a constant,
  our AGSP shows that the ground state is well approximated by a
  matrix product state with a sublinear bond dimension
  $B=e^{\torderof{\log^{3/4}n/\eps^{1/4}}}$. Using this in
  conjunction with known dynamical programing algorithms, yields an
  algorithm for a $1/\poly(n)$ approximation of the ground energy
  with a subexponential running time $T\le
  \exp\big(e^{\torderof{\log^{3/4}n/\eps^{1/4}}}\big)$.

\end{abstract}

\section{Introduction}

Understanding the structure and complexity of ground states of local
Hamiltonians is one of the central problems in Condensed Matter
Physics and Quantum Complexity Theory.  In gapped systems, a
remarkably general conjecture about the structure of ground states,
{\it The Area Law}, bounds the entanglement that such states can
exhibit. Specifically, for any subset $S$ of particles, it bounds
the entanglement entropy of $\rho_S$, the reduced density matrix of
the ground state restricted to $S$, by the surface area of $S$,
i.e., the number of local interactions between $S$ and
$\overline{S}$~\cite{ref:AL-rev}.

Although the general area law remains an open conjecture, a lot of
progress has been made on proving it for 1D systems.  The
breakthrough came with Hastings' result~\cite{ref:Has07}, which
shows that the entanglement entropy across a cut for a 1D system is
a constant independent of $n$, the number of particles in the
system, and scales as $e^{\orderof{\frac{\log d}{\epsilon}}}$, where $d$ is
the dimension of each particle and $\epsilon$ is the spectral gap.
This result implies that the ground state of a gapped 1D Hamiltonian
can be approximated in the complexity class $\NP$.

In this paper, we:
\begin{itemize}

  \item Give an exponential improvement to 
    $\torderof{\frac{\log^3d}{\eps}}$ in the bound of entanglement
    entropy for the general (frustrated) 1D Hamiltonians.  The
    dependence on the gap even improves the previous best bound for
    frustration free 1D Hamiltonians and may possibly be tight to
    within log factors.
  
  \item Prove the existence of sublinear bond dimension Matrix 
    Product State approximations of ground states for general 1D
    Hamiltonians. This implies a subexponential time algorithm for
    finding such states thus providing evidence that this task is
    not $\NP$-hard.
  
\end{itemize}

\noindent 
We also establish the following properties of local Hamiltonians
which may be of independent interest:
\begin{itemize}
  \item ``Random walk like'' behavior of entanglement: for a 1D
    Hamiltonian $H$, the Entanglement Rank (ER) of $H^{\ell}$ is
    bounded by $(\ell d)^{O(\sqrt{\ell})}$.
    
  \item Let $H'$ be the Hamiltonian consisting only of terms acting 
    on a subset $S$ of particles. Then the ground state of $H$ has
    an exponentially small amount of norm in the "high" energy
    spectrum of $H'$: the total norm with energy above $t$ is
    $2^{-\Omega (t - |\partial S|)}$ where $|\partial S|$ is the
    size of the boundary between $S$ and $\overline{S}$.
\end{itemize}

The work here has its origins in the combinatorial approach
of~\cite{ref:DL2}, which used the Detectability lemma, introduced
earlier in~\cite{ref:DL}, to give a very different proof of
Hastings' result for the special case of frustration-free 
Hamiltonians. The results there were greatly strengthened
in~\cite{ref:FOCS2011-AreaLaw} and \cite{ref:DL-AreaLaw-2011}, which
introduced Chebyshev polynomials in conjunction with the
detectability lemma to construct very strong AGSPs (approximate
ground state projectors), leading to an exponential improvement of
Hastings' bound in the frustration-free case to $O((\frac{\log
d}{\epsilon})^3)$. 

The starting point for our results is to consider a more general
situation where the Hamiltonian obeys the 1D constraint only in a
small neighborhood of $s$ particles around the cut in question (see
\Fig{fig:1dline}). The particles to the left and to the right of this small
neighborhood are acted upon by multi-particle Hamiltonians $H_L$ and
$H_R$ respectively. Constructing an AGSP for the new Hamiltonian is
now much simpler, since the Hamiltonian has small norm: the AGSP is
just a suitable Chebyshev polynomial of the Hamiltonian. In the
frustration-free case, the new Hamiltonian has the same ground state
as the original Hamiltonian, and this leads to a much simpler (and
slightly stronger) proof of the Area Law. In the general, 
frustrated case, there is a tradeoff between the norm of the new
Hamiltonian and how close its ground state is to that of the
original Hamiltonian. To establish an area law, we must now consider
a sequence of Hamiltonians whose ground states converge to the
ground state of the original Hamiltonian, and derive an entropy
bound from the tradeoff between the rate of convergence and the rate
of increase of entanglement rank. 

\begin{figure}
  \begin{center}
    \includegraphics[scale=1]{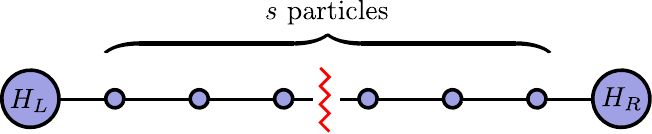}  
  \end{center}
  \caption{\label{fig:1dline}
  The 1D setting. We focus on a segment of $s$ particles
  around the cut, denoting the multiparticle Hamiltonians to the
  left and right of the segment by $H_L$ and $H_R$ respectively.}
\end{figure}

\section{Background:  Approximate Ground State Projectors and their consequences}

The overall strategy is to start with a product state $\ket{\psi}$
and repeatedly apply some operator $K$ such that
$\frac{1}{\norm{K^j\ket{\psi}}}K^j\ket{\psi} $ approximates the
ground state and the entanglement rank of $K^j\ket{\psi}$ is not too
large.  This property of an operator $K $ is captured in the
following definition of an approximate grounds state projection
(AGSP):
\begin{deff}[An Approximate Ground-Space Projection (AGSP)] \ \\
  \label{def:AGSP} 
  Consider a local Hamiltonian system $H=\sum_i H_i$ on a 1D chain,
  together with a cut between particles $i^*$ and $i^*+1$ that
  bi-partitions the system. We say that an operator $K$ is a $(D,
  \Delta)$-Approximate Ground Space Projection (with respect to the
  cut) if the following holds:
  \begin{itemize}
    \item \textbf{Ground space invariance:} for any ground state
      $\ket{\gs}$, $K\ket{\gs} = \ket{\gs}$.
      
    \item \textbf{Shrinking:} for any state $\ket{\gsp}\in \Hip$,
      also $K\ket{\gsp}\in\Hip$, and $\norm{K\ket{\gsp}}^2 \le
      \Delta$.
      
    \item \textbf{Entanglement:} 
      the entanglement rank of $K$, as an element of the tensor
      product of two operator spaces (for the first and the second
      part of the system), is at most $D$.

  \end{itemize}
  (The last condition implies that the operator $K$ changes the
  entanglement rank of an arbitrary quantum state $\ket{\phi}$ at
  most by factor of $D$, i.e.\ $\ER{K \ket{\phi}} \le D
  \cdot\ER{\phi}$.)
\end{deff}
  
The parameters $\Delta$ and $D$ capture the tradeoff between the
rate of movement towards the ground state and the amount of 
entanglement that applying the operator $K$ incurs. 
In~\cite{ref:FOCS2011-AreaLaw,ref:DL-AreaLaw-2011}, it was shown
that a favorable tradeoff gives an area law:
\begin{lem}
  \label{lem:mu1} 
  If there exists an $(D, \Delta)$-AGSP with $D\cdot\Delta \leq
  \frac{1}{2}$, then there is a product state
  $\ket{\phi}$ whose overlap with
  the ground state is $\mu=|\braket{\gs}{\phi}| \ge 1/\sqrt{2D}$.
\end{lem}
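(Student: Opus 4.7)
The plan is to take the optimal product state across the cut and show that its overlap with the ground space must already be reasonably large, by a short extremal argument: applying $K$ once cannot produce a state much closer to the ground space than a convex combination of product states can achieve, yet $K$ does move the state much closer. The tension between these two bounds forces $\mu$ to be at least $1/\sqrt{2D}$.

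Concretely, I would let $\ket{\phi}$ be a product state (with respect to the cut) that maximizes $\mu := \|\Pi_{\Hi}\ket{\phi}\|$, where $\Pi_{\Hi}$ projects onto the ground space. Writing
\[
  \ket{\phi} = \mu\ket{\gs} + \sqrt{1-\mu^2}\,\ket{\gsp},
\]
with $\ket{\gs}\in\Hi$ and $\ket{\gsp}\in\Hip$ unit vectors, the ground-space invariance and shrinking properties of $K$ give
\[
  K\ket{\phi} = \mu\ket{\gs} + \sqrt{1-\mu^2}\,K\ket{\gsp}, \qquad \|K\ket{\phi}\|^2 \le \mu^2 + (1-\mu^2)\Delta \le \mu^2 + \Delta.
\]
Hence the normalized state $\ket{\psi} := K\ket{\phi}/\|K\ket{\phi}\|$ satisfies
\[
  |\braket{\gs}{\psi}| \;=\; \frac{\mu}{\|K\ket{\phi}\|} \;\ge\; \frac{\mu}{\sqrt{\mu^2+\Delta}}.
\]

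For the upper bound on $|\braket{\gs}{\psi}|$, I use the entanglement property of $K$: since $\ket{\phi}$ is a product state, $\ER{K\ket{\phi}}\le D$, so $\ket{\psi}$ admits a Schmidt decomposition $\ket{\psi} = \sum_{i=1}^{D}\lambda_i\ket{a_i}\ket{b_i}$ with $\sum_i|\lambda_i|^2 = 1$ and each $\ket{a_i}\ket{b_i}$ a unit product state. By the maximality of $\mu$, every product state has overlap at most $\mu$ with any fixed ground vector, so $|\braket{\gs}{a_i b_i}|\le\mu$ for all $i$. Cauchy--Schwarz then gives
\[
  |\braket{\gs}{\psi}| \;=\; \Big|\sum_i \lambda_i \braket{\gs}{a_i b_i}\Big| \;\le\; \sqrt{\sum_i|\lambda_i|^2}\,\sqrt{\sum_i|\braket{\gs}{a_i b_i}|^2} \;\le\; \sqrt{D}\,\mu.
\]

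Combining the two bounds yields $\mu/\sqrt{\mu^2+\Delta}\le\sqrt{D}\,\mu$, i.e.\ $\mu^2 \ge 1/D - \Delta$, and the hypothesis $D\cdot\Delta \le 1/2$ gives $\mu^2 \ge 1/(2D)$, as claimed. The main subtlety to get right is the Cauchy--Schwarz step: one has to exploit the ground-space optimality of $\mu$ against a \emph{fixed} vector $\ket{\gs}$ in $\Hi$ (namely the normalized projection of $\ket{\phi}$), rather than against the full projector $\Pi_{\Hi}$, so that the per-term bound $|\braket{\gs}{a_i b_i}|\le\mu$ is legitimate and not merely a statement about $\|\Pi_{\Hi}\ket{a_i b_i}\|$. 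Everything else is arithmetic.
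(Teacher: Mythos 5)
Your proof is correct and is essentially the standard bootstrapping argument for this lemma from the works cited by the paper (\cite{ref:FOCS2011-AreaLaw,ref:DL-AreaLaw-2011}); the paper itself states the lemma without proof and defers to those references. One small remark: the ``subtlety'' you flag in the final paragraph is not really there---since $|\braket{\gs}{a_i b_i}|\le\|\Pi_\Hi\ket{a_i b_i}\|\le\mu$ for any $\ket{\gs}\in\Hi$, the per-term bound follows directly from the maximality of $\mu$ over product states, precisely \emph{via} the projector bound rather than around it---but this does not affect the validity of the argument.
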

  
\begin{lem}
\label{lem:S} 
  If there exists a product state whose overlap with the ground
  state is at least $\mu$, together with a $(D, \Delta)$-AGSP,
  then the entanglement entropy of $\ket{\gs}$ is bounded by
  \begin{align}
  \label{eq:Sb}
    S\le \orderof{\frac{ \log \mu^{-1}}{\log \Delta^{-1}} }\cdot\log D \ .
  \end{align}
 \end{lem}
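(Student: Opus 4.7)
The plan is to use repeated applications of $K$ to produce, for each integer $\ell \ge 0$, a low-Schmidt-rank approximation to $\ket{\gs}$, and then to convert the resulting family of tail bounds on the Schmidt spectrum of $\ket{\gs}$ into a bound on its entropy. To carry this out, write $\ket{\phi} = \mu \ket{\gs} + \sqrt{1-\mu^2}\ket{\gsp}$ with $\ket{\gsp} \in \Hip$. Ground-space invariance together with the shrinking property of $K$, iterated $\ell$ times, gives $K^\ell \ket{\phi} = \mu\ket{\gs} + \sqrt{1-\mu^2}\,K^\ell\ket{\gsp}$ with $\|K^\ell\ket{\gsp}\|^2 \le \Delta^\ell$. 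The normalized vector $\ket{\tilde\phi_\ell} := K^\ell\ket{\phi}/\|K^\ell\ket{\phi}\|$ therefore satisfies $1 - |\braket{\tilde\phi_\ell}{\gs}|^2 \le \Delta^\ell/\mu^2$; moreover, since $\ket{\phi}$ has Schmidt rank $1$ and each factor of $K$ multiplies the Schmidt rank by at most $D$, $\ER{\tilde\phi_\ell} \le D^\ell$. Letting $\lambda_1 \ge \lambda_2 \ge \cdots$ denote the Schmidt coefficients of $\ket{\gs}$, the Eckart--Young theorem applied to $\ket{\tilde\phi_\ell}$ then yields the tail bound $t_\ell := \sum_{i > D^\ell}\lambda_i^2 \le \Delta^\ell/\mu^2$ for every $\ell \ge 0$.

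To turn this family of tail bounds into an entropy bound, set $\ell^* := \lceil\log(\mu^{-2})/\log(\Delta^{-1})\rceil$ so that $t_{\ell^*+j} \le \Delta^j$ for every $j \ge 0$, and partition the Schmidt indices into blocks $B_0 = \{1,\ldots,D^{\ell^*}\}$ and $B_j = \{D^{\ell^*+j-1}+1,\ldots,D^{\ell^*+j}\}$ for $j \ge 1$. The block masses $m_j := \sum_{i \in B_j}\lambda_i^2$ satisfy $m_0 \le 1$ and $m_j \le t_{\ell^*+j-1} \le \Delta^{j-1}$. Bounding the entropy inside each block by that of the uniform distribution on the block gives the standard inequality $S \le \sum_j m_j \log |B_j| + H(\{m_j\})$; plugging in $\log |B_j| \le (\ell^*+j)\log D$ together with the geometric bound on the $m_j$ (where we may assume $\Delta \le 1/2$ without loss of generality, by replacing $K$ with a power of itself) reduces the sum to a convergent geometric series, which evaluates to $\orderof{\ell^* \log D} = \orderof{(\log\mu^{-1}/\log\Delta^{-1})\log D}$, as claimed.

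The main technical point is the choice of block decomposition: the leading $\ell^*$ factor must appear only through the size of the first block $B_0$, where it is multiplied by total mass at most $1$, while all subsequent blocks carry geometrically decaying masses that damp the growing $\log |B_j|$. A naive doubling partition starting from index $1$ would instead pay an $\ell^*$ factor in every block and yield only the weaker bound $\orderof{(\ell^*)^2 \log D}$; arranging the first, large block to ``absorb'' the trivial-tail regime $\ell \le \ell^*$ is what makes the argument tight.
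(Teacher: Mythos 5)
Your argument is correct and is essentially the standard proof of this lemma from~\cite{ref:FOCS2011-AreaLaw,ref:DL-AreaLaw-2011}, which the present paper cites rather than reproving: iterate $K$ on the product state to get, for each $\ell$, a Schmidt-rank-$D^\ell$ state with overlap $\ge \mu^2/(\mu^2+(1-\mu^2)\Delta^\ell)$, deduce the tail bound $\sum_{i>D^\ell}\lambda_i^2 \le \Delta^\ell/\mu^2$ via Eckart--Young, and sum the entropy over geometrically growing blocks. The block bookkeeping (first block of size $D^{\ell^*}$, then geometric decay of block masses with ratio $\Delta$) is exactly right, and the normalization $\Delta \le 1/2$ via powering $K\to K^k$, $D\to D^k$, $\Delta\to\Delta^k$ correctly preserves the target bound.

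One small caveat on the last equality: since $\ell^* = \lceil \log\mu^{-2}/\log\Delta^{-1}\rceil \ge 1$, what the argument actually proves is $S \le O\bigl(\bigl(1+\log\mu^{-1}/\log\Delta^{-1}\bigr)\log D\bigr)$, which equals the stated $O\bigl((\log\mu^{-1}/\log\Delta^{-1})\log D\bigr)$ only when the ratio is $\Omega(1)$. This is an imprecision inherited from the lemma's own phrasing rather than a flaw in your reasoning, and it is harmless for the intended use in \Cor{col:half}, where $\mu \ge 1/\sqrt{2D}$ and $\Delta \le 1/(2D)$ force the ratio to be $\le 1/2$ and the delivered bound is $O(\log D)$ either way. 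Also, your closing remark about the ``naive doubling partition'' paying $(\ell^*)^2\log D$ is not quite accurate: that only happens if one bounds each $m_j$ by $1$ individually instead of using $\sum_j m_j = 1$; with the latter, even the naive partition gives $O(\ell^*\log D)$. Your single large leading block is a clean way to make the bookkeeping automatic, but it is not strictly required for tightness.
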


Combined, the above two lemmas give conditions for an area law:

\begin{corol}[Area Law]
  \label{col:half}
  If there exists an $(D,\Delta)$-AGSP such that
  $D\cdot\Delta\le\frac{1}{2}$, the ground state entropy is bounded 
  by:
  \begin{align}
  \label{eq:AL}
    S\le \orderof{1}\cdot\log D \ .
  \end{align} 
 \end{corol}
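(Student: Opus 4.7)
The plan is to obtain the corollary as an immediate consequence of chaining \Lem{lem:mu1} and \Lem{lem:S}; essentially no new idea is required, only bookkeeping of the two logarithmic factors produced by the two lemmas. First I would invoke \Lem{lem:mu1}: the hypothesis $D\cdot\Delta \le 1/2$ is exactly what that lemma needs, so it yields a product state $\ket{\phi}$ with
\[
  \mu := |\braket{\gs}{\phi}| \;\ge\; \frac{1}{\sqrt{2D}} .
\]
Then I would feed this product state, together with the same AGSP, into \Lem{lem:S} to obtain the entropy bound $S \le \orderof{\log\mu^{-1}/\log\Delta^{-1}}\cdot \log D$.

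The remaining step is to show that the prefactor $\log\mu^{-1}/\log\Delta^{-1}$ is a universal constant. From the overlap bound, $\log\mu^{-1} \le \tfrac12\log(2D) = \tfrac12(\log 2 + \log D)$. From the AGSP hypothesis, $\Delta \le 1/(2D)$, so $\log \Delta^{-1} \ge \log 2 + \log D$. Taking the ratio gives
\[
  \frac{\log\mu^{-1}}{\log\Delta^{-1}} \;\le\; \frac{\tfrac12(\log 2 + \log D)}{\log 2 + \log D} \;=\; \frac12,
\]
which, absorbed into the $\orderof{\cdot}$, produces $S \le \orderof{1}\cdot \log D$, as desired.

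The only thing to double-check is that the ratio is genuinely well-defined: we need $\log\Delta^{-1} > 0$, i.e.\ $\Delta < 1$. But the assumption $D\cdot\Delta \le 1/2$ and the fact that $D\ge 1$ (for any nontrivial AGSP) force $\Delta \le 1/2$, so $\log\Delta^{-1} \ge \log 2 > 0$ and the denominator in \Lem{lem:S} is safely bounded away from zero. I do not see any obstacle here beyond this mild sanity check; the corollary is genuinely just the composition of the two preceding lemmas with the trivial observation that the hypothesis $D\Delta\le 1/2$ simultaneously controls both $\mu$ and $\Delta$ on the same logarithmic scale.
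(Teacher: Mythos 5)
Your proposal is correct and is precisely the argument the paper intends: the paper offers no separate proof of Corollary~\ref{col:half} beyond the remark that it follows by ``combining'' Lemma~\ref{lem:mu1} and Lemma~\ref{lem:S}, and your explicit bookkeeping — $\log\mu^{-1}\le\tfrac12\log(2D)$ and $\log\Delta^{-1}\ge\log(2D)$, giving a ratio of at most $1/2$ — is exactly that composition spelled out.
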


\section{Overview}

The results here rely on the construction of a suitable AGSP that
allows the application of Corollary \ref{col:half}.  The first
critical step is to exchange local structure far from the cut for a
valuable {\bf reduction in the norm} of the Hamiltonian.  To do
this, we isolate a neighborhood of $s+1$ particles around the cut in
question, and then separately truncate the sum of the terms to the
left and to the right of these $s+1$ particles.  Specifically, we
define the truncation of an operator as follows:
\begin{deff}[Truncation] 
\label{d:trunc} 
  For any self-adjoint operator $A$, form $A^{\leq t}$, the
  \emph{truncation} of $A$, by keeping the eigenvectors the same,
  keeping the eigenvalues below $\leq t$ the same, and replacing any
  eigenvalue $\geq t$ with $t$.
\end{deff}
We then define $\Ht= (\sum_{i<1} H_i)^{\leq t} + H_1 + \dots + H_{s}
+ (\sum_{i>s}H_i)^{\leq t}$, where the $s$ middle terms act on the 
the isolated string of $s+1$ particles around the cut.  The result
is a Hamiltonian $H$ that is now norm bounded by $u=s+2t$ acting on
$n$ particles with the following structure: 
\begin{equation} 
\label{e:ham}
  H = \Ht= H_L + H_1 + H_2 + \dots + H_{s} + H_R,
\end{equation}
where each $H_i $ are norm bounded by $1$ and acts locally on
particles $m+i$ and $m+i+1$, $H_L$ acts on particles $1, \ldots , m$
and $H_R$ acts on particles $m+s+1, \ldots , n$.  We are interested
in the entanglement entropy across the cut in the middle, i.e.,
between particles $m+s/2$ and $m+s/2+1$. In the frustration free
case, it is clear that the ground state of $\Ht$ is the same as that
of the original Hamiltonian and it can be shown that the spectral
gap is preserved for some constant value of $t$.  For the frustrated
case, the ground state of $\Ht$ is no longer that of the original
Hamiltonian and a limiting argument (see below) will be needed to
complete the proof.

Having reduced the problem to a Hamiltonian with bounded norm $u$ of
the form (\ref{e:ham}), we turn to the next critical step of
constructing the AGSP, the {\bf use of Chebyshev polynomials} to
approximate the projection onto the ground state. We begin with a
suitably modified Chebyshev polynomial $C_{\ell}(x)$ of degree
$\ell$ with the properties that $C_{\ell}(0)=1$ and $|C_{\ell}(x)|
\leq e^{-\bigOmega{\ell\sqrt{\eps/u}}}$ for $\eps \leq x \leq u$. 
The AGSP is then $K= C_{\ell} (H)$ and it is clear that
$\Delta=e^{-\bigOmega{\ell\sqrt{\eps/u}}}$.

Bounding the ER for $K$ requires important new ideas. We may take
the approach of~\cite{ref:FOCS2011-AreaLaw,ref:DL-AreaLaw-2011} as a
starting point and expand $H^{\ell}$ into terms of the form
$H_{j_1}\cdots H_{j_\ell}$. For each such term, there is some $i$
such that $H_i$ occurs at most $\ell/s$ times. Thus, the
entanglement rank of the given term across cut $i$ is less or equal
to $d^{2\ell/s}$. The ER across the middle cut is at most times
$d^{s}$ times greater, which gives an upper bound $d^{2\ell/s+s}$.
The difficulty is that the number of terms, $(s+2)^{\ell}$, is too
large. To address this issue, we introduce formal commuting
variables $Z_i$ and consider the polynomial \[P(Z)=(H_L Z_0 + H_1
Z_1 + \dots + H_R Z_{s+1}) ^{\ell}= \sum_{a_0 +\cdots +
a_{s+1}=\ell} f_{a_0,\ldots,a_{s+1}}Z_0 ^{a_0} Z_1 ^{a_1} \dots
Z_{s+1}^{a_{s+1}}.\] In particular,
$H^{\ell}=\sum_{a_0,\ldots,a_{s+1}}f_{a_0,\ldots,a_{s+1}}$. This
expression has fewer terms, namely, $\binom{\ell+s+1}{s+1}$. As
before, for each multi-index $(a_0,\ldots,a_{s+1})$ there is some
$i$ such that $a_i\le l/s$. If $i$ is fixed, a linear combination of
the corresponding operators $f_{a_0,\ldots,a_{s+1}}$ can be
generated as follows. We restrict our attention to only those terms
in $P(Z)$ where $Z_i$ appears at most $\ell/s$ times and assign
arbitrary values to the variables $Z_0,\ldots,Z_{s+1}$. The ER of
the resulting operator is estimated using the representation $P(Z)
=(A +H_i Z_i +B)^{\ell}$, where $A$ and $B$ commute. We then use a
{\bf polynomial interpolation} argument to express each
$f_{a_0,\ldots,a_{s+1}}$, their sum $H^{\ell}$, and finally, the
operator $K$. Thus we prove that the ER of $K$ is at most
$D=(dl)^{\orderof{l/s+s}}$.

Applying Theorem \ref{col:half} to the above AGSP with $\ell=
O(s^2)$, $s= \tilde{O}(\log^2(d)/ \eps)$ yields our Area Law for
frustration free Hamiltonians, providing an entanglement entropy
bound of the form $\tilde{O}( \log ^3 (d)/\eps)$.

To address the frustrated case, a third critical result is needed: 
that the ground states of $\Ht$ are very good approximations of the
ground state of the original Hamiltonian.  Intuitively, the
structure of the small eigenvectors and eigenvalues of $\Ht$ should
approach those of $H$ as $t$ grows and we show that to be the case,
showing a {\bf robustness theorem}: that the ground states of $\Ht$
and $H$ are exponentially close in $t$ and the spectral gaps are of
the same order.

We would like to apply Theorem \ref{col:half} to an AGSP for $\Ht$,
for $t$ sufficiently large, however, if we try to do this in one
step, the ER cost becomes a large function of $t$.  Instead we use a
well chosen arithmetic sequence $t_0, t_1, \dots$ and the associated
AGSP's to $H ^{(t_i)}$ to guide the movement towards the ground
state.  The robustness theorem allows for very rapid convergence,
the result of which is the area law in the general (frustrated)
case.

\section{Approximate Ground State Projector} 
\label{s:2}

Consider a Hamiltonian $H$ acting on $n$ particles with the
following structure: $H = H_L + H_1 + H_2 + \dots + H_{s} + H_R$,
where $H_i$ acts locally on particles $m+i$ and $m+i+1$, $H_L$ acts
on particles $1, \ldots , m$ and $H_R$ acts on particles $m+s+1,
\ldots , n$. Assume that $H$ has a unique ground state
$\ket{\Gamma}$ with energy $\eps_0$ and that the other eigenvalues
belong to the interval $[\eps_1,u]$.  Let $\eps = \eps_1 - \eps_0$
denote the spectral gap. We wish to bound the entanglement entropy
of $\ket{\Gamma}$ across the middle cut, $i=s/2$. (In our notation,
cut $i$ separates the particles $m+i$ and $m+i+1$.)

We define the AGSP as $K=C_{\ell}(H)$, where $C_{\ell}$ is a
polynomial that satisfies the conditions below for a suitable value
of $\Delta$.
\begin{enumerate}
  \item $C_{\ell}(\eps_0)=1$;
  \item $|C_{\ell}(x)|\le\sqrt{\Delta}$ for $\eps_1 \leq x \leq u$.
\end{enumerate}
It follows that $K\ket{\Gamma}=\ket{\Gamma}$ and that the
restriction of $K$ to the orthogonal complement of $\ket{\Gamma}$
has norm less or equal to $\sqrt{\Delta}$.

\begin{lem}
  There exists a degree $\ell$ polynomial $C_{\ell}$ that satisfies
  the above conditions for
  \begin{align*}
    \sqrt{\Delta}=2\,e^{-2\ell \sqrt{(\eps_1-\eps_0)/(u-\eps_0)}}.
  \end{align*}
\end{lem}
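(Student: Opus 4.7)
The plan is to construct $C_\ell$ as a normalized Chebyshev polynomial on an affinely shifted interval. Let $T_\ell$ denote the standard degree-$\ell$ Chebyshev polynomial of the first kind, and let $\phi$ be the affine map $\phi(x)=\bigl(2x-(\eps_1+u)\bigr)/(u-\eps_1)$, which bijectively sends $[\eps_1,u]$ onto $[-1,1]$. Define
\[
  C_\ell(x) \EqDef \frac{T_\ell(\phi(x))}{T_\ell(\phi(\eps_0))}.
\]
Since $\eps_0<\eps_1$, the point $\phi(\eps_0)$ lies strictly to the left of $-1$, so the denominator is nonzero, and condition~(1), $C_\ell(\eps_0)=1$, is then automatic. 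For condition~(2), any $x\in[\eps_1,u]$ forces $\phi(x)\in[-1,1]$, on which $|T_\ell|\le 1$, so $|C_\ell(x)|\le 1/|T_\ell(\phi(\eps_0))|$, and the whole lemma reduces to the lower bound
\[
  |T_\ell(\phi(\eps_0))| \;\ge\; \tfrac{1}{2}\,e^{2\ell\sqrt{(\eps_1-\eps_0)/(u-\eps_0)}}.
\]

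To establish this, I would set $\gamma\EqDef|\phi(\eps_0)|=1+2(\eps_1-\eps_0)/(u-\eps_1)$ and write $\gamma=\cosh\theta$. The standard identity $T_\ell(\cosh\theta)=\cosh(\ell\theta)$ then gives $|T_\ell(\phi(\eps_0))|=\cosh(\ell\theta)\ge\tfrac{1}{2}e^{\ell\theta}$, so it suffices to show $\theta\ge 2\sqrt{\eta}$, where $\eta\EqDef(\eps_1-\eps_0)/(u-\eps_0)\in[0,1)$. A direct substitution gives $\cosh\theta=(1+\eta)/(1-\eta)$, so the required inequality becomes the purely scalar statement $(1-\eta)\cosh(2\sqrt{\eta})\le 1+\eta$. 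Expanding $\cosh(2\sqrt{\eta})=\sum_{k\ge 0}(4\eta)^k/(2k)!$ and collecting powers of $\eta$, the coefficient of $\eta^k$ on the left minus the coefficient on the right comes out proportional to $4-2k(2k-1)$, which is non-positive for every $k\ge 2$; this certifies the inequality on the full interval $[0,1)$.

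The construction itself (rescale Chebyshev, normalize at $\eps_0$) is entirely standard and needs no new ingredients. The only place where something could have gone wrong is the final scalar inequality: the natural affine rescaling produces a rate with denominator $u-\eps_1$, whereas the lemma is stated with the slightly weaker but cleaner $u-\eps_0$, and the power-series comparison above is exactly what absorbs this mismatch in closed form. There is no serious obstacle in this proof; the real leverage from this particular polynomial is felt only downstream, when $K=C_\ell(H)$ is fed into the AGSP framework and the quadratic-in-$\ell$ savings in the exponent drive the eventual area-law bound.
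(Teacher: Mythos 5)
Your proof is correct and follows essentially the same path as the paper: the same affine rescaling of a Chebyshev polynomial onto $[\eps_1,u]$, the same normalization at $\eps_0$, the same $\cosh(\ell\theta)\ge\tfrac12 e^{\ell\theta}$ step, and the same algebraic observation that the cross-ratio $\eta=(\eps_1-\eps_0)/(u-\eps_0)$ equals $(\gamma-1)/(\gamma+1)$. The only departure is in the final scalar inequality $\theta\ge 2\sqrt{\eta}$: the paper invokes $\theta\ge 2\tanh(\theta/2)$ together with the half-angle identity $\tanh(\theta/2)=\sqrt{(\cosh\theta-1)/(\cosh\theta+1)}$, whereas you verify the equivalent statement $(1-\eta)\cosh(2\sqrt\eta)\le 1+\eta$ by a term-by-term power-series comparison — a slightly longer but equally rigorous route to the same elementary fact.
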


\begin{proof}
  We construct $C_{\ell}$ by a linear rescaling of the Chebyshev
  polynomial $T_{\ell}$, which is defined by the equation
  $T_{\ell}(\cos\theta)=\cos(\ell\theta)$. It follows immediately
  that $|T_{\ell}(x)|\le 1$ for $x\in[-1,1]$. If $x>1$, the equation
  $\cos\theta=x$ has a complex solution, $\theta=it$, where $\cosh
  t=x$. In this case, $T_{\ell}(x)=\cosh(\ell
  t)\ge\frac{1}{2}e^{\ell t}$. Since $t\ge 2\tanh(t/2)=
  2\sqrt{(x-1)/(x+1)}$, we conclude that
  \begin{align*}
    T_{\ell}(x)\ge \frac{1}{2}e^{2\ell\sqrt{(x-1)/(x+1)}}.
  \end{align*}
  Now, let
  \begin{align*}
    C_{\ell}(y)=\frac{T_{\ell}(f(y))}{T_{\ell}(f(\eps_0))},\qquad
    \text{where}\quad f(y)=\frac{u+\eps_1-2y}{u-\eps_1}.
  \end{align*}
  The function $f$ maps $\eps_1$ to $1$ and $u$ to $-1$, hence
  $|C_{\ell}(y)| \le\frac{1}{T_{\ell}(f(\eps_0))}$ for
  $y\in[\eps_1,u]$. The bound for $T_{\ell}(x)$ with $x=f(\eps_0)$
  matches the expression for $\Delta$ because
  $\frac{\eps_1-\eps_0}{u-\eps_0}
  =\frac{f(\eps_1)-f(\eps_0)}{f(u)-f(\eps_0)}=\frac{x-1}{x+1}$.
\end{proof}

\begin{lem} 
\label{l:agsp}
  The entanglement rank of $K=C_{\ell}(H)$ (where $C_{\ell}$ is an
  arbitrary degree $\ell$ polynomial) is bounded by
  $D=(d\ell)^{\orderof{\max\{\ell/s, \sqrt{\ell}\}}}$.
\end{lem}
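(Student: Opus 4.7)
The plan is to reduce the bound on $\ER{C_\ell(H)}$ to a bound on $\ER{H^m}$ for each $m \le \ell$, exploiting that $C_\ell$ is a polynomial of degree $\ell$ and that entanglement rank is subadditive under sums and invariant under scalar multiplication: $\ER{C_\ell(H)} \le \sum_{m=0}^\ell \ER{H^m} \le (\ell+1)\max_{m\le\ell}\ER{H^m}$. The $(\ell+1)$ factor is polynomial and absorbs into $(d\ell)^{\orderof{\max\{\ell/s,\sqrt{\ell}\}}}$, so it suffices to prove $\ER{H^m} \le (dm)^{\orderof{\max\{m/s,\sqrt{m}\}}}$ for every $m \le \ell$.

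To bound $\ER{H^m}$, I introduce formal commuting variables. Setting $H_0 := H_L$ and $H_{s+1} := H_R$ and writing $H = \sum_{i=0}^{s+1} H_i$, I form the operator-valued polynomial $P(Z) := \bigl(\sum_{i=0}^{s+1} Z_i H_i\bigr)^m = \sum_{|a|=m} f_a Z^a$, where $Z = (Z_0,\ldots,Z_{s+1})$ and $f_a$ is the sum over all orderings of the $H_i$'s with letter counts matching the multi-index $a$. Then $H^m = P(\mathbf{1}) = \sum_{|a|=m} f_a$. The key combinatorial input is pigeonhole: every $a$ has some coordinate $a_{i(a)} \le m/(s+2) \le m/s$. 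I plan to partition the $f_a$'s according to $i(a)$ (with the smallest such $i$ as a tie-breaking rule) and to bound the contribution of each group to $\ER{H^m}$ separately.

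For a fixed pigeonhole index $i$, substituting $Z_j = 1$ for $j \ne i$ yields $P_i(z) := (A + zH_i + B)^m$, where $A := \sum_{j<i}H_j$ and $B := \sum_{j>i}H_j$ \emph{commute}, since they act on disjoint sets of particles under the 1D structure. Writing $P_i(z) = \sum_{k=0}^m z^k g_{i,k}$ with $g_{i,k} := \sum_{a:\,a_i=k}f_a$, polynomial interpolation at $\orderof{m}$ carefully chosen points expresses the ``low-$a_i$'' partial sum $\sum_{k\le m/s}g_{i,k}$ as a bounded linear combination of evaluations $P_i(z_\nu)$. I estimate $\ER{P_i(z_\nu)}$ by expanding $(A + z_\nu H_i + B)^m$ as a sum over length-$m$ words in $\{A, z_\nu H_i, B\}$, collapsing each inter-$H_i$ sub-word into an $(A+B)^{m_j}$-block via $[A,B]=0$, and then splitting each $H_i$ via its Schmidt decomposition $H_i = \sum_{\alpha=1}^{d^2} L_\alpha \otimes R_\alpha$ across the middle cut. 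Counting the compositions $(m_0,\ldots,m_k)$ of $m-k$ into $k+1$ parts together with the $(p_j,q_j)$-splittings within each block yields a rank-$1$-sum decomposition of size $\le \binom{m+k+1}{2k+1}d^{2k}$; the interpolation restriction to $k \le m/s$ then yields a bound of $(dm)^{\orderof{m/s}}$ on the ER of the $i$-th partial sum.

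The main obstacle is capping the exponent by $\sqrt{m}$ rather than $s$, which is what improves the overview's coarser $(d\ell)^{\orderof{\ell/s+s}}$ bound. The difficulty is that when the pigeonhole index $i$ is far from the middle cut, the operator $A$ or $B$ itself spans the middle cut, and translating the natural Schmidt rank at the cut between positions $i$ and $i+1$ over to the middle cut at $s/2$ naively costs $d^{|i-s/2|}\le d^s$. To cap this translation cost at $\sqrt{m}$ independently of $s$, I plan to invoke the independent ``random-walk'' bound announced in the abstract: for an arbitrary 1D local Hamiltonian $H'$, $\ER{(H')^m}\le(dm)^{\orderof{\sqrt{m}}}$, because the cumulative operator spread around any fixed cut behaves like a $\sqrt{m}$-step random walk rather than an $m$-step linear walk. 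Applying this cap to the Schmidt-rank growth contributed by the intermediate $(A+B)^{m_j}$-blocks bounds the translation cost at $\sqrt{m}$ independently of $s$. Combining the interpolation bound $(dm)^{\orderof{m/s}}$ with this cap produces a total exponent of $\orderof{m/s+\sqrt{m}} = \orderof{\max\{m/s,\sqrt{m}\}}$, and a final sum over the $s+2$ choices of $i$ (contributing only a polynomial factor) yields the claimed $(d\ell)^{\orderof{\max\{\ell/s,\sqrt{\ell}\}}}$ bound.
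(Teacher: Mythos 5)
Your first four steps — reducing $\ER{C_\ell(H)}$ to $\ER{H^m}$, introducing the generating function $P(Z)$ in commuting formal variables, the pigeonhole to find an index $i$ with $a_i \le m/s$, and the polynomial-interpolation argument that recovers the partial sums $\sum_{k\le m/s} g_{i,k}$ from $\orderof{m}$ evaluations of $P_i(z) = (A + zH_i + B)^m$ with $[A,B]=0$ — match the paper's proof essentially line for line. The divergence is how you cap the middle-cut translation cost, and there you have a genuine circularity: the ``independent random-walk bound'' $\ER{H^\ell} \le (\ell d)^{\orderof{\sqrt{\ell}}}$ that you invoke \emph{is} Lemma~\ref{l:agsp} itself (it is the lemma's bound $(d\ell)^{\orderof{\max\{\ell/s,\sqrt{\ell}\}}}$ in the regime $s \ge \sqrt{\ell}$, and that is what the abstract is advertising). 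You cannot use it inside its own proof, and you have given no independent argument that ``the cumulative operator spread behaves like a $\sqrt{m}$-step random walk'' — that intuition is precisely what this lemma is supposed to establish.

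The paper's resolution is much simpler and entirely elementary: observe that you are free to choose where to draw the $H_L / H_1 / \cdots / H_s / H_R$ decomposition, since re-bracketing $H = (H_L + H_1 + \cdots) + H_{j_0+1} + \cdots + H_{j_1-1} + (\cdots + H_s + H_R)$ does not change the operator $H^\ell$ and hence does not change its actual entanglement rank across the middle cut. So w.l.o.g.\ one may take $s \le \sqrt{\ell}$ by absorbing the outer $H_j$'s into $H_L$ and $H_R$. After this reduction, your own interpolation bound gives $\ER{Q_{i,k}} \le \binom{\ell-k+s}{s}\binom{\ell+k}{2k+1}\,d^{2k+s}$, and the translation factor is $d^{2|i-s/2|} \le d^s \le d^{\sqrt{\ell}}$ while $k \le \ell/s$, which combine to $(d\ell)^{\orderof{\ell/s}}$ with $\ell/s \ge \sqrt{\ell}$. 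The two regimes are exactly what the $\max\{\ell/s, \sqrt{\ell}\}$ in the statement encodes. With that one-line re-bracketing observation replacing your circular appeal to the abstract, the rest of your argument goes through and coincides with the paper's.
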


\begin{proof}
  W.l.o.g.\ we may assume that $s\le\sqrt{\ell}$. If that is not the
  case, we can reduce $s$ to $\sqrt{\ell}$ by joining some of the
  $H_{j}$'s with either $H_{L}$ or $H_{R}$. This does not change the
  actual entanglement rank or the required bound. After this
  reduction, the bound can be written as
  $(d\ell)^{\orderof{\ell/s}}$.

  $K= C_{\ell}(H)$ is a linear combination of $\ell+1$ powers of
  $H$, and we will bound the entanglement rank added by each,
  focusing on the worst case $H^{\ell}$. Let us first consider the
  expansion $H^{\ell}=\sum_{j_1,\dots,j_\ell}H_{j_1}\dots
  H_{j_\ell}$. It has too many terms to be useful, but we can group
  them by the number of occurrences of each $H_{j}$. To this end, we
  introduce a generating function, which is a polynomial in formal
  commuting variables $Z_0,\dots,Z_{s+1}$:
  \begin{align*}
    P_{\ell}(Z)=(H_L Z_0 + H_1 Z_1 + \dots + H_R Z_{s+1})^{\ell}
    = \sum_{a_0+\cdots+a_{s+1}=\ell} f_{a_0,\ldots,a_{s+1}}
      Z_0^{a_0} Z_1^{a_1} \dots Z_{s+1}^{a_{s+1}}.
  \end{align*}
  Each coefficient $f_{a_0,\ldots,a_{s+1}}$ is the sum of products
  $H_{j_1}\dots H_{j_\ell}$, where each $H_{j}$ occurs exactly $a_j$
  times. We are interested in estimating the ER of
  $H^{\ell}=\sum_{a_0,\ldots,a_{s+1}}f_{a_0,\ldots,a_{s+1}}$.

  We start by noticing that for each multi-index
  $(a_0,\ldots,a_{s+1})$, there is some $i\in\{1,\ldots,s\}$ such
  that $a_i\le l/s$. Thus, $H^{\ell}=\sum_{i=1}^{s}
  \sum_{k=0}^{\ell/s} Q_{i,\ell k}$, where $Q_{i,\ell k}$ includes
  some of the operators $f_{a_0,\ldots,a_{s+1}}$ such that $a_i=k$
  and $\sum_{j\not=i}a_j=\ell-k$. (This decomposition of $H^{\ell}$
  is not unique.) We will, eventually, bound the ER of each
  $Q_{i,\ell k}$. To do that, we first define a generating function
  that includes all the matching $f_{a_0,\ldots,a_{s+1}}$'s:
  \begin{align*}
    P_{i,\ell k}(Z)
    =\sum_{\substack{a_i=k\\ \sum_{j\not=i}a_j=\ell-k}} f_{a_0,\ldots,a_{s+1}}
    \prod_{j\not=i}Z_{j}^{a_j}.
  \end{align*}
  This sum has $t=\binom{\ell-k+s}{s}$ terms. The variable $Z_i$ is
  excluded, or we may consider it equal to $1$. When the remaining
  variables are assigned definite values, $Z\in\C^{s+1}$, we obtain
  a linear combination of the operators $f_{a_0,\ldots,a_{s+1}}$.
  The key observation is that such linear combinations have full
  rank, i.e.\ there are $t$ distinct values of $Z\in\C^{s+1}$ such
  that the corresponding $\{P_{i,\ell k}(Z)\}$ form a basis in the
  space of operators of the form $\sum
  c_{a_0,\ldots,a_{s+1}}f_{a_0,\ldots,a_{s+1}}$, where
  $c_{a_0,\ldots,a_{s+1}}\in\C$ and the sum runs over the support of
  $P_{i,\ell k}$. In particular, $Q_{i,\ell k}$ is a linear
  combination of $t$ operators of the form $P_{i,\ell k}(Z)$.

  For a fixed $Z$, the operator $P_{i,\ell k}(Z)$ can be obtained as
  follows. We write $P_{\ell}(Z)=(A+H_i+B)^{\ell}$, where
  $A=\sum_{j<i}H_jZ_j$ and $B=\sum_{j>i}H_jZ_j$, and then collect
  the terms with $H_i$ appearing exactly $k$ times. Since $A$ and
  $B$ commute, such terms have the form $A^{a_0}B^{b_0} H_i \cdots
  H_iA^{a_k}B^{b_k}$. There are $\binom{\ell+k}{2k+1}$ distinct
  terms like that, and the ER of each term across cut $i$ is at most
  $d^{2k}$. The ER across the middle cut is bounded by that number
  times $(d^2)^{|i-s/2|}\le d^s$. Combining all factors, we find
  that
  \begin{align*}
    \ER{Q_{i,\ell k}}\le \binom{\ell-k+s}{s}\binom{\ell+k}{2k+1}\,d^{2k+s}
      \le \ell^{\orderof{s}}\ell^{\orderof{\ell/s}}d^{2\ell/s+s}\le
        (d\ell)^{\orderof{\ell/s}}.
  \end{align*}
  Here we have used the fact that $k\le\ell/s$ and $s\le\sqrt{\ell}$.
  The summation over $i$ and $k$ does not change this asymptotic form.
\end{proof}

Lemma~\ref{l:agsp} gives a non-trivial tradeoff between the
entanglement rank $D$ and shrinking coefficient $\Delta$ of the
operator $K$.  By suitable choice of parameters this will give the
desired $(D,\Delta)$-AGSP such that $D\cdot\Delta\le\frac{1}{2}$ and
Corollary \ref{col:half} will apply. One issue that we will have to
address is the bound $t$ on the norms of $H_L$ and $H_R$.  We first
tackle the case of frustration free Hamiltonians, where we can
assume W.L.O.G. that $t = O(1)= |H_L| = |H_R|$:

Let $H' = \sum H_i$ be a frustration free Hamiltonian with spectral
gap $\epsilon$.  For $t$ chosen in a moment, define $H_L=(\sum_{i
\leq m} H_i)^{\leq t}$ and $H_R=(\sum_{i \geq m+ s+1}H_i)^{\leq t}$
to be the truncation of the Hamiltonian acting on the left and right
ends of the line.  Set $H= H_L + H_1 + H_2 + \dots + H_s + H_R$ so
it is in the form as above.  Clearly $H$ has the same ground state
as $H'$.  Since $\eps_0 =0$, Lemma \ref{l:gap} (below) yields that
for $t=O(\frac{1}{\eps})$ the Hamiltonian $H$ has a gap that is at
least a constant times $\epsilon$.

\begin{theorem}  
\label{t:ent} 
  For a frustration free Hamiltonian $H' = \sum H_i$
  with gap $\eps$ the entanglement entropy is $O(\frac{\log^3
  d}{\epsilon})$.
\end{theorem}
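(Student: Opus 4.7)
The plan is to apply Corollary \ref{col:half} to the truncated Hamiltonian $H = H_L + H_1 + \dots + H_s + H_R$ together with the Chebyshev-based AGSP $K = C_\ell(H)$ from Lemma \ref{l:agsp}, choosing $s$ and $\ell$ so that the AGSP satisfies $D\cdot\Delta\le\tfrac{1}{2}$ while $\log D = O(\log^3 d/\eps)$.

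First I fix the truncation threshold $t = \Theta(1/\eps)$. By Lemma \ref{l:gap}, the Hamiltonian $H$ has the same ground state $\ket{\gs}$ as $H'$, has spectral gap $\Omega(\eps)$, and has norm bounded by $u\le s + 2t = O(s + 1/\eps)$. In particular, as soon as $s\ge 1/\eps$, we have $u = \Theta(s)$, so the normalized gap is $\eps/u = \Theta(\eps/s)$.

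Next I choose $\ell = s^2$, which balances the two branches of the bound in Lemma \ref{l:agsp} because $\max\{\ell/s,\sqrt{\ell}\}=s$; this gives $\log D = O(s\log(ds))$. The Chebyshev shrinkage is
\begin{align*}
\log \Delta^{-1} \;=\; \Omega\!\left(\ell\sqrt{\eps/u}\right) \;=\; \Omega\!\left(s^{3/2}\sqrt{\eps}\right).
\end{align*}
The condition $D\cdot\Delta \le \tfrac{1}{2}$ then reduces to $s^{3/2}\sqrt{\eps} \ge C\, s\log(ds)$, i.e.\ $s \ge C'\,(\log(ds))^2/\eps$. Taking $s = \Theta((\log d)^2/\eps)$ (absorbing $\log\log$ slack into the constant) satisfies both $s\ge 1/\eps$ and the balance condition, and Corollary \ref{col:half} then yields
\begin{align*}
S \;\le\; O(\log D) \;=\; O(s\log(ds)) \;=\; O\!\left(\frac{\log^3 d}{\eps}\right).
\end{align*}

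The main obstacle is the coupled three-way tradeoff: the entanglement rank in Lemma \ref{l:agsp} grows like $(d\ell)^{\ell/s}$, which pushes $s$ up; the shrinking factor improves as $\ell\sqrt{\eps/u}$ grows, which also wants $\ell$ and therefore $s$ large; yet $u$ itself scales with $s$, which degrades $\Delta$. The point of setting $\ell = s^2$ is precisely to align $\ell/s$ with $\sqrt{\ell}$ so that the ER bound is linear in $s$, after which the Chebyshev exponent $\ell\sqrt{\eps/u} = s^{3/2}\sqrt{\eps}$ dominates $s\log(ds)$ exactly when $s = \tilde{\Theta}(\log^2 d /\eps)$. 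The only ``external'' input is the gap/norm tradeoff guaranteed by Lemma \ref{l:gap}, which is what allows us to treat $t = O(1/\eps)$ as a constant-scale perturbation; everything else reduces to the parameter optimization above.
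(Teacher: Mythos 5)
Your proof is correct and matches the paper's own argument essentially line for line: truncate at $t=\Theta(1/\eps)$ and invoke Lemma~\ref{l:gap}, pick $\ell=\Theta(s^2)$ to balance the two branches of the ER bound in Lemma~\ref{l:agsp}, observe that $D\cdot\Delta\le\tfrac12$ reduces to $s^{3/2}\sqrt{\eps}\gtrsim s\log(ds)$, solve to get $s=\Theta(\log^2 d/\eps)$, and conclude via Corollary~\ref{col:half}. The only cosmetic differences are $\ell=s^2$ versus the paper's $\ell=s^2/2$ and your explicit remark that $s\ge 1/\eps$ makes $u=\Theta(s)$, which the paper leaves implicit.
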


\begin{proof} 
  From Lemma \ref{l:gap}, for $t= O(\frac{1}{\eps})$, we have that
  $H$ has the same ground state and gap of the same order as $H'$.
  Recall lemma \ref{l:agsp} applied to $H$ describe an AGSP with
  bounds $\Delta = e^{-\frac{\ell \sqrt{\eps}}{\sqrt{s+2t}}}$ and $D
  =(\ell +1)\binom{\ell + {s}}{{s}}^2 (\frac{\ell}{s} +1)\binom{\ell
  + \frac{\ell}{s}}{2\frac{\ell}{s}} d^{2\ell/ s} d^s$. Set $\ell =
  s^2/2$, $c= \sqrt{s}/\sqrt{s+2t}$ so that $\Delta =
  e^{-c\epsilon^{1/2}s^{3/2}}$, and $D \leq \binom{(s^2 + s)/2}{s}
  ^4 d^{2s}$. 

  Write the condition $D \Delta < 1/2$ as $\log D < \log
  \frac{1}{\Delta} - 1$. $\log \frac{1}{\Delta} =
  {c\epsilon^{1/2}s^{3/2}}$, and $\log D = O(s (log d + \log (s^2 +
  s)))$.  Thus we can satisfy the condition with $s = O(\frac{log^2
  d}{\epsilon})$, and therefore $\log D = O((\frac{log^3
  d}{\epsilon}))$ and the result follows directly from Corollary
  \ref{col:half}.

\end{proof}

\section{Low bond dimension MPS for frustration free 1D Hamiltonians}

We can use these results to show the existence of a matrix product
state of sub-linear bond dimension of size $\exp (O(
\eps^{-\frac{1}{3}} \log ^{\frac{2}{3}} n))$, that approximates a
ground state $\ket{\Gamma}$ of a gapped frustration free 1D
Hamiltonian to within $\frac{1}{poly(n)}$.  To show the existence of
a matrix product state of bond dimension $B$ within $\delta$ of the
$\ket{\Gamma}$, it suffices to show the existence of a state of
entanglement rank $B$ within $\frac{\delta}{n}$ of $\ket{\Gamma}$.

We've shown the existence of a state $\ket{\psi}$ with constant
overlap with $\ket{\Gamma}$ and entanglement rank $O(\frac{\log ^3
d}{\eps})$.  To this state, we would like to apply an AGSP with
$\Delta = \frac{1}{poly(n)}$.  Just as in the proof of Theorem
\ref{t:ent}, we choose $\ell =s^2$ and Lemma \ref{l:agsp}
establishes that a $\Delta = e^{-c\epsilon^{-1/2}s^{3/2}}$, $D= exp
(s \log d)$ AGSP exists.  Setting $s=O(\eps ^{-\frac{1}{3}} \log
^{\frac{2}{3} }n)$, we have $\Delta = \frac{1}{poly(n)}$ with $D =
\exp (O( \eps^{-\frac{1}{3}} \log ^{\frac{2}{3}} n))$.

\section{Frustrated Case}

\subsection{The operator $\Ht$}

We consider the ground state $\ket{\Gamma}$ of a local Hamiltonian
$H' = \sum H_i '$, where $H_i'$ acts locally on the particles $i$
and $i+1$, and $0 \leq H_i' \leq \Id$.  We assume the Hamiltonian
$H'$ has a unique ground state with energy $\eps_0$ and next lowest
energy $\eps_1$; let $\eps = \eps_1 - \eps_0$ denote the spectral
gap.  It is easy to see that in such a case $\eps \leq 1$.

Ideally we wish to replace $H'$ with some Hamiltonian $H$ with the
same ground state $\ket{\Gamma}$ and spectral gap $\eps$, but with
smaller norm, so that the AGSP from Section \ref{s:2} yields a good
bound on the entanglement entropy of $\ket{\Gamma}$.  Towards that
goal we consider Hamiltonians of the following more general form:

\begin{equation}
\label{e:form}
 H= H_L + H_1 + H_2 + \dots + H_{s} + H_R, 
\end{equation} 
where $H_i$ acts locally on particles $m+i$ and $m+i+1$, $H_L$ acts
on particles $1, \ldots , m$ and $H_R$ acts on particles $m+s+1,
\ldots , n$.  We further require that $H_L$ and $H_R$ are positive
and $|H_1 + \dots H_{s-2}| \leq s$.

We now show that $\ket{\Gamma}$ is the ground state of such a
Hamiltonian $H$ with the added properties that the ground energy of
$H_L$, $H_R$ and $\sum_{i=4}^{s-3} H_i$ are all $0$ and $0\leq H_i
\leq 1$ for $i=1,2,3, s-2, s-1, s$.  By enforcing these properties,
we have $\eps_0 \leq 6$. We do this by setting:
\begin{itemize}
  \item $H_{L}= \sum_{i=1}^ m H_i' - c \mathbbm 1$ where $c$ is the
    ground energy of $\sum_{i=1}^ m H_i' $,
    
  \item  $H_{R}= \sum_{i=m+s+1}^ n H_i' - c' \mathbbm 1$ where $c'$ is
    the ground energy of $\sum_{i=m+s+1}^ n H_i' $,
    
  \item $H_i= H_{m+i} '$ for the six values $i=1,2,3,s-2,s-1, s$,
  
  \item $H_i = H_{m+i} - \frac{d}{s-7} \mathbbm 1$ for $4\leq i\leq
    s-3$, where $d$ is the ground energy of $\sum_{i=4}^{s-3} H_i'$.
\end{itemize}

It is easily verified that $H$ is of the form (\ref{e:form}), and
since the difference between $H$ and $H'$ is a multiple of the
identity, $H$ has ground state $\ket{\Gamma}$.  The Hamiltonian $H$
has ground energy $\leq 6$ since the tensor product of the ground
states for the disjoint operators $H_L$, $\sum_{i=4}^{s-3} H_i$, and
$H_R$ only can have non-zero energy on $H_1 + H_2 + H_3 + H_{s-2}
+H_{s-1} + H_{s}$.  

To bound the norm (so as to effectively apply the AGSP from Section
\ref{s:2}) we use the previously defined notion of truncation
(Definition \ref{d:trunc}).  If for any self-adjoint operator $A$,
we let $P_t$ be the projection into the subspace of eigenvectors of
$A$ with eigenvalues $\le t$, then 
\begin{align}
  \label{def:HR-trunc} 
  A^{\leq t} =P_t A P_t + t(1-P_t) \ .
\end{align}

Define $\Ht= (H_L + H_1)^{\leq t} + H_2 + \dots + H_{s-1} +(H_{s} +
H^R)^{\leq t}$; it is easy to verify for $t\geq 0$, $\Ht \leq H$ and
$\Ht \leq (2t +s) \mathbbm 1$.  Unfortunately, the truncated
Hamiltonian $\Ht$ no longer has the same ground state
$\ket{\Gamma}$.  Intuitively, the structure of the small
eigenvectors and eigenvalues of $\Ht$ should approach those of $H$
as $t$ grows.  The Robustness Theorem (Theorem \ref{l:gap}), stated
below and proved in Section \ref{s:3}, verifies this intuition,
showing that for $t$ bigger than some constant, the gap of $\Ht$ is
of the same order as the gap of $H$ and the ground states of $\Ht$
and $H$ are exponentially close in $t$: 
\begin{theorem}[Robustness Theorem]
\label{thm:Ht} \label{l:gap}
  Let $\ket{\Gamma}, \epsilon_0, \epsilon_1$ be the ground state, the
  ground energy and the first excited level of $H$, and let
  $\ket{\phi}, \epsilon_0 ', \epsilon_1 '$ be the equivalent
  quantities of $\Ht$. Then for $t\ge
  \orderof{\frac{1}{\eps_1 - \eps_0}(\frac{\eps_0}{\epsilon_1-\epsilon_0}+1)}$, we have
  \begin{enumerate}[a.]
        \item \label{enum:1} $\epsilon_1' - \epsilon_0' \ge
    \orderof{\epsilon_1-\epsilon_0}$
    \item  \label{enum:2}
      $\norm{\ket{\phi}-\ket{\Gamma}}^2 \le 2^{-\orderof{t}}.       $
   
  \end{enumerate}
\end{theorem}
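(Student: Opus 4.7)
The plan is to combine three ingredients: that $\ket{\Gamma}$ is nearly an exact eigenvector of $\Ht$ with eigenvalue $\eps_0$; that the ground energy of $\Ht$ satisfies $\eps_0-\eps_0' \le 2^{-\Omega(t)}$; and a variational argument pinning $\ket{\phi}$ close to $\ket{\Gamma}$. The workhorse throughout will be the exponential decay of high-energy weight (the second independent-interest result advertised in the abstract): for the ground state $\ket{\psi}$ of a 1D Hamiltonian of our form and any subset Hamiltonian $H'$ built from its own terms, $\|P^{H'}_{>k}\ket{\psi}\|^2 \le 2^{-\Omega(k-|\partial S|-\eps_0)}$.

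Set $A := H_L+H_1$ and $B := H_{s}+H_R$, so that $\Ht = H - \Delta$ with $\Delta := \Delta_A + \Delta_B$, $\Delta_A := A-A^{\le t}=(A-t)_+$, and similarly for $\Delta_B$. I would first apply the decay lemma to $\ket{\Gamma}$ with subset Hamiltonians $A$ and $B$ (each of boundary size $O(1)$): integrating $(\lambda-t)^2$ against the exponentially decaying tail gives $\|\Delta\ket{\Gamma}\|^2 \le 2^{-\Omega(t)}$, hence $\|(\Ht-\eps_0)\ket{\Gamma}\|^2 \le 2^{-\Omega(t)}$, and $\langle\Gamma|\Ht|\Gamma\rangle\le\eps_0$ (so $\eps_0'\le\eps_0$).

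For the harder reverse estimate, I start from $\eps_0 \le \langle\phi|H|\phi\rangle = \eps_0' + \langle\phi|\Delta|\phi\rangle$, and apply the decay lemma a second time --- now to $\ket{\phi}$ as the ground state of $\Ht$, with $A^{\le t}$ (which is itself one of the terms of $\Ht$) in the role of subset Hamiltonian. Since $\{A^{\le t} \ge \tau\} = \{A \ge \tau\}$ for $\tau \le t$, this yields $\|P^A_{\ge \tau}\ket{\phi}\|^2 \le 2^{-\Omega(\tau)}$. Combining this tail control with the variational bound $\langle\phi|A^{\le t}|\phi\rangle \le \eps_0' \le \eps_0$ (and its analogue for $B$) gives $\langle\phi|\Delta|\phi\rangle \le 2^{-\Omega(t)}$, hence $\eps_0 - \eps_0' \le 2^{-\Omega(t)}$. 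Running the same argument on the first excited eigenstate of $\Ht$ via the min-max principle yields $\eps_1 - \eps_1' \le 2^{-\Omega(t)}$, so $\eps_1' - \eps_0' \ge (\eps_1-\eps_0) - 2^{-\Omega(t)}$; the hypothesized lower bound on $t$ ensures the exponential error is dominated by a fixed fraction of $\eps_1-\eps_0$, establishing part (a).

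For part (b), I decompose $\ket{\Gamma} = \alpha\ket{\phi} + \beta\ket{\phi^\perp}$ in the eigenbasis of $\Ht$; since $\langle\Gamma|\Ht|\Gamma\rangle\le\eps_0$,
\begin{align*}
  |\alpha|^2\eps_0' + |\beta|^2\eps_1' \le \eps_0 \quad\Longrightarrow\quad |\beta|^2 \le \frac{\eps_0-\eps_0'}{\eps_1'-\eps_0'} \le \frac{2^{-\Omega(t)}}{\Omega(\eps_1-\eps_0)},
\end{align*}
which is $2^{-\Omega(t)}$ under the $t$-threshold; a phase alignment then gives $\|\ket{\phi}-\ket{\Gamma}\|^2 \le 2|\beta|^2 \le 2^{-\Omega(t)}$. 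The hard part will be the second application of the decay lemma --- on the ground state of the \emph{truncated} Hamiltonian $\Ht$ rather than on $\ket{\Gamma}$ itself. The saving grace is that $A^{\le t}$ is a genuine term of $\Ht$, so the lemma applies verbatim; I only need to verify that the decay constants remain $O(1)$, which follows from $\eps_0' \le \eps_0 = O(1)$ in the setup of this section.
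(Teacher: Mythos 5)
Your plan deviates substantially from the paper's, and it contains a gap that I do not think can be repaired as written. Let me focus on the crux.

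The first application of the tail-decay lemma, to $\ket{\Gamma}$, is fine: the paper's Truncation Lemma (Lemma~\ref{l:smalltail}) indeed gives $\|(1-P_\tau)\ket{\Gamma}\|^2\le 2^{-\Omega(\tau)}$ for \emph{all} $\tau$, since $H-A$ is a genuine sum of $O(1)$-norm local terms and the Lieb--Robinson/Hadamard machinery applies cleanly. Integrating $(\lambda-t)^2$ against this tail does give $\|\Delta\ket{\Gamma}\|^2\le 2^{-\Omega(t)}$ and hence $\eps_0'\le\eps_0$.

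The second application, to $\ket{\phi}$, is where the argument fails. The obstruction is not merely technical: the Hamiltonian $\Ht$ carries \emph{no information} about the spectrum of $H_L+H_1$ above level $t$, because the truncation collapses the entire tail $\{H_L+H_1\ge t\}$ to the single eigenvalue $t$ of $(H_L+H_1)^{\le t}$. Consequently, the best one can extract from ``$\ket{\phi}$ is the ground state of $\Ht$'' is a bound on the \emph{total} weight $\|(1-P^A_t)\ket{\phi}\|^2\le 2^{-\Omega(t)}$ (and even that only via an adaptation of the Truncation Lemma, not ``verbatim'': after truncation $A^{\le t}$ is no longer a sum of bounded local terms, so the Hadamard expansion's $\|Q_i\|\le 2\cdot 4^i\, i!$ bound, which relies on each local summand having norm $\le 1$, does not carry over directly). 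But $\bra{\phi}\Delta_A\ket{\phi}=\bra{\phi}(A-t)_+\ket{\phi}$ weights that tail by its height above $t$, and nothing prevents the weight from sitting at eigenvalues of $A$ as large as $\|H_L\|=\Theta(n)$. So all you get is
$\bra{\phi}\Delta\ket{\phi}\lesssim n\cdot 2^{-\Omega(t)}$,
not $2^{-\Omega(t)}$. This is fatal both for the claimed bound on $\eps_0-\eps_0'$ and, downstream, for the area law, which needs the Robustness Theorem at $t=O(1)$ independent of $n$. In fact, the paper never proves an exponential bound on $\eps_0-\eps_0'$ at all; it only establishes the polynomial bound $\eps_0'\ge\eps_0-O(\eps_0/t)$ via a direct low-/high-energy split of an arbitrary test state.

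The min-max step for $\eps_1-\eps_1'$ has the same problem compounded: to bound the energy of $H$ on $\mathrm{span}\{\ket{\phi},\ket{\phi^1}\}$ you need tail control of $\Delta$ on $\ket{\phi^1}$ as well, and the decay lemma is a ground-state result; nothing you have cited extends it to the first excited state of $\Ht$.

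For contrast, the paper's route avoids every one of these pitfalls. For part~(a) it assumes a small gap of $\Ht$ for contradiction, projects \emph{both} of its lowest eigenstates through $P_t$ (the spectral projector of $H-A$ with $A=H_2+H_{s-1}$, a boundary operator chosen so that $\Ht P_t = H P_t$), and shows via Lemma~\ref{l:1} and the Markov bound that both projections must be close to $\ket{\Gamma}$ while remaining nearly orthogonal---a contradiction. For part~(b) it never touches the spectrum of $H_L+H_1$ above $t$ from $\ket{\phi}$'s side at all: it shows $P_t\ket{\Gamma}$ is exponentially close to $\ket{\Gamma}$ (Truncation Lemma on the untruncated ground state, where Lieb--Robinson is unproblematic) and simultaneously an exponentially accurate eigenvector of $\Ht$, then uses part~(a) plus the polynomial lower bound on $\eps_0'$ to identify the nearby $\Ht$-eigenvector as the ground state. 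The exponential in~(b) comes entirely from $\|(1-P_t)\ket{\Gamma}\|$, not from $\eps_0-\eps_0'$. Your plan inverts this logic and in doing so loses the exponential.
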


\subsection{General 1D Area Law} 
\label{s:final}

We would like to apply lemma \ref{l:agsp} to $\Ht$, for $t$
sufficiently large, however, if we try to do this in one step, the
entanglement rank cost becomes a large function of $t$. The overall
plan is therefore to use a well chosen sequence $t_0, t_1, \dots$ to
guide the movement towards the ground state. More concretely we
define a sequence of states $\ket{\psi_0}, \ket{\psi_1}, \ldots$
that converge to $\ket{\Gamma}$, while carefully controlling the
tradeoff between increase in entanglement rank and increase in
overlap with $\ket{\Gamma}$. 

Denote by $1 - \mu_i$, the overlap between the ground state
$\ket{\phi ^{t_i} }$ of $H^{(t_i)}$ and $\ket{\Gamma}$.  We use the
AGSP $K = C_{\ell}(H^{(t_i)})$ from lemma \ref{l:agsp} to move from
state $\ket{\psi_{i-1}}$ to $\ket{\psi_{i}}$, where $\ket{\psi_i}$
has overlap at least $1 - \mu_i$ with $\ket{\phi^{t_i} }$. We will
show that the increase in entanglement rank of each move is small
enough to bound the entanglement entropy of the limiting state which
is the ground state of $H$.

We now put all the ingredients together to prove an area law for
general 1D systems:

\begin{theorem}
  For any Hamiltonian of the form $H = H_L + H_1 + \cdots H_s + H_R$
  with a spectral gap of $\epsilon$, the entanglement entropy of the
  ground state across the $(s/2, s/2+1)$ cut is bounded by
  $O(\frac{\log^3 d} {\epsilon})$.
\end{theorem}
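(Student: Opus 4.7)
The plan is to extend the frustration-free argument of \Thm{t:ent} to the general case by iterating along a sequence of truncated Hamiltonians $H^{(t_i)}$. The Robustness Theorem (\Thm{l:gap}) guarantees that the ground states $\ket{\phi^{t_i}}$ of these Hamiltonians converge exponentially fast in $t_i$ to $\ket{\Gamma}$, while the gaps stay of order $\eps$, so the main task is to build a sequence $\ket{\psi_i}$ of states with controlled entanglement rank whose limit is $\ket{\Gamma}$.

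First I would fix $s = \tilde{O}(\log^2 d / \eps)$ and $\ell = \Theta(s^2)$ as in \Thm{t:ent}, and pick an arithmetic truncation sequence $t_i = t_0 + i\cdot\delta t$ starting above the Robustness threshold. \Lem{l:agsp} then supplies AGSPs $K_i = C_\ell(H^{(t_i)})$ with parameters $(D_i,\Delta_i)$ satisfying $D_i \Delta_i \le 1/2$ and $\log D_i = O(\log^3 d/\eps)$, provided $t_i$ stays in the regime where the norm bound $u_i = s + 2t_i$ is comparable to that of the frustration-free case. Starting from a product state $\ket{\psi_0}$ with overlap $\ge 1/\sqrt{2D_0}$ with $\ket{\phi^{t_0}}$ (\Lem{lem:mu1}), I would inductively set $\ket{\psi_i}$ to be the normalised image $K_i^{m_i}\ket{\psi_{i-1}}$, with $m_i$ chosen minimally so that $\ket{\psi_i}$ has overlap at least $1 - 2^{-\Omega(t_i)}$ with $\ket{\phi^{t_i}}$.

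The amortisation of ER across the sequence rests on Robustness's exponential convergence: two applications of \Thm{l:gap} give $\norm{\ket{\phi^{t_{i-1}}} - \ket{\phi^{t_i}}} \le 2^{-\Omega(t_{i-1})}$, so $\ket{\psi_{i-1}}$ already meets most of the required overlap with the next target $\ket{\phi^{t_i}}$, and only a bounded number of $K_i$-applications per level is needed to keep the cumulative ER at $\exp(O(\log^3 d/\eps))$. The final state in the sequence then approximates $\ket{\Gamma}$ to within any desired inverse polynomial in $n$, and a limiting argument extracts the bound $S(\ket{\Gamma}) = O(\log^3 d/\eps)$ by the same ratio $\log D/\log\Delta^{-1} = O(1)$ appearing in \Cor{col:half}.

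The hardest step is converting this uniform-ER approximation into an entropy bound on $\ket{\Gamma}$ itself, because a direct Fannes-type continuity bound would cost a factor of $\log\dim\Hi = \Theta(n\log d)$, which would swamp the intended $n$-independent bound. I would circumvent this by leveraging that the ER of $\ket{\psi_i}$ stays bounded while $\norm{\ket{\psi_i} - \ket{\Gamma}}$ shrinks to zero: this forces every tail Schmidt coefficient of $\ket{\Gamma}$ beyond the accumulated ER to be dominated by the approximation error, so the Schmidt spectrum of $\ket{\Gamma}$ is effectively concentrated on $\exp(O(\log^3 d/\eps))$ singular values, yielding the desired bound. Correctly balancing the per-level ER growth, the truncation step $\delta t$, and the final precision required to make this limit argument rigorous is the main delicate point of the proof.
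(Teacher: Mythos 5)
Your high-level plan — iterate over an arithmetic sequence of truncations $t_i$, apply AGSPs for $H^{(t_i)}$, and exploit the Robustness Theorem's exponential convergence — is the right skeleton and matches the paper. But there is a genuine gap in your handling of the entanglement-rank bookkeeping, and it is exactly the place where the paper's argument lives. You assert that the cumulative ER of $\ket{\psi_i}$ ``stays bounded'' at $\exp(O(\log^3 d/\eps))$, with $\log D_i = O(\log^3 d/\eps)$ uniformly ``provided $t_i$ stays in the regime where $u_i = s+2t_i$ is comparable to the frustration-free case.'' It cannot stay in that regime: the only way to get the error $\norm{\ket{\phi^{t_i}} - \ket{\Gamma}}$ below $2^{-\Omega(i)}$ is to take $t_i \to \infty$, and then the norm $u_i$ blows up. With $u_i$ growing and $\ell$ fixed at $\Theta(s^2)$ as you propose, the shrinking factor from \Lem{l:agsp} degenerates: $\sqrt{\Delta_i} = 2e^{-2\ell\sqrt{\eps/u_i}} \to 2$, so $K_i$ stops contracting at all; repeated applications $K_i^{m_i}$ cannot rescue this. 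The paper instead lets $\ell_i = O(\sqrt{(t_i+s)/\eps})$ grow with $i$ so that each level still achieves constant shrinking $\Delta = 1/32$, at the cost that $\log R_i = \log R_0 + O(\sum_{j\le i}\ell_j\log d)$ grows without bound.

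This growth is precisely why your final step fails as written. Because $R_i \to \infty$, it is \emph{not} true that the Schmidt spectrum of $\ket{\Gamma}$ is ``effectively concentrated on $\exp(O(\log^3 d/\eps))$ singular values'': a generic gapped ground state has infinite Schmidt rank. The paper's actual resolution is an amortised entropy bound. From $\sum_{k\le R_i}\lambda_k^2 \ge |\braket{\psi_i}{\Gamma}|^2 \ge 1 - O(2^{-i})$, the Schmidt mass beyond rank $R_i$ is $O(2^{-i})$, and the entropy is bounded by the weighted sum $\sum_i O(2^{-i})\log R_i$. Since $\log R_i$ grows only polynomially in $i$ (indeed $\ell_j = O(\sqrt{(t_0+jc+s)/\eps}) = O(\sqrt{j})$, so $\log R_i = \log R_0 + O(i^{3/2}\cdot\text{poly}(\log d,1/\eps))$), the series converges and is dominated by its first term $O(\log R_0) = O(\log^3 d/\eps)$. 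This ``exponential overlap convergence beats polynomial ER growth'' trade-off is the key idea your proposal is missing, and without it the transition from a sequence of approximants to an entropy bound on $\ket{\Gamma}$ cannot be made rigorous.
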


We begin with a lemma:
\begin{lem}  
  \label{l:5} There are constants $t_0$ and $c$ and states
  $\ket{\psi_i}$ with entanglement rank $R_i$, $i=0,1,2 \dots$
  satisfying:
\begin{enumerate}
  \item $| \bra{\psi_i} \ket{\Gamma}| \geq 1 - O(2^{-i}), 
    \ i\geq 0$.
  \item $\log R_0 = O(\frac{\log ^3 d}{\eps})$, $\log R_i = 
    \log R_0 + O(\sum_{j=1}^i \ell_j \log d)$, with $\ell_j =
    O(\sqrt{(t_j +s)/\eps})$.
\end{enumerate}
\end{lem}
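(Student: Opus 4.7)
The plan is to construct the sequence iteratively: produce $\ket{\psi_0}$ by applying the frustration-free machinery of Theorem~\ref{t:ent} to the truncated Hamiltonian $H^{(t_0)}$ for a sufficiently large constant $t_0$, and then obtain $\ket{\psi_i}$ from $\ket{\psi_{i-1}}$ by a single application of the AGSP $K_i=C_{\ell_i}(H^{(t_i)})$ of Lemma~\ref{l:agsp}, followed by renormalization. The parameters $t_i$ will grow linearly in $i$, and $\ell_i=\orderof{\sqrt{(t_i+s)/\eps}}$ will be chosen so that the shrinking factor $\Delta_i\le 1/4$.

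\textbf{Base case.} First I would fix $t_0$ to be a constant large enough that the Robustness Theorem~\ref{thm:Ht} gives both $\|\ket{\phi^{t_0}}-\ket{\Gamma}\|^2\le 1/8$ and a spectral gap of $H^{(t_0)}$ of order $\eps$. Since $H^{(t_0)}$ then has bounded norm, the proof of Theorem~\ref{t:ent} applies verbatim and produces an AGSP for $H^{(t_0)}$ with $\log D=\orderof{\log^3 d/\eps}$ and $D\cdot\Delta\le 1/2$, together (via Lemma~\ref{lem:mu1}) with a product state of overlap $\ge 1/\sqrt{2D}$ with $\ket{\phi^{t_0}}$. Iterating this AGSP a constant number of times on the product state yields $\ket{\psi_0}$ with constant overlap with $\ket{\phi^{t_0}}$ and entanglement rank $R_0$ satisfying $\log R_0=\orderof{\log^3 d/\eps}$; the triangle inequality with the Robustness bound gives $|\braket{\psi_0}{\Gamma}|\ge 1-\orderof{1}$, which is condition~1 at $i=0$.

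\textbf{Inductive step.} Assume $\ket{\psi_{i-1}}$ has been constructed with $|\braket{\psi_{i-1}}{\Gamma}|^2\ge 1-\orderof{2^{-(i-1)}}$. I would pick $t_i=c\cdot i$ with $c$ large enough that the Robustness Theorem guarantees $\|\ket{\phi^{t_i}}-\ket{\Gamma}\|^2\le 2^{-(i+2)}$, and pick $\ell_i=\orderof{\sqrt{(t_i+s)/\eps}}$ so that $K_i=C_{\ell_i}(H^{(t_i)})$ satisfies $\Delta_i\le 1/4$. Writing $\ket{\psi_{i-1}}=\alpha_i\ket{\phi^{t_i}}+\beta_i\ket{\phi^{t_i}_\perp}$, the Robustness estimate combined with the inductive hypothesis gives $|\beta_i|^2=\orderof{2^{-(i-1)}}$ via the triangle inequality. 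Since $K_i$ fixes $\alpha_i\ket{\phi^{t_i}}$ and shrinks $\beta_i\ket{\phi^{t_i}_\perp}$ by a factor of at most $\sqrt{\Delta_i}$, setting $\ket{\psi_i}:=K_i\ket{\psi_{i-1}}/\|K_i\ket{\psi_{i-1}}\|$ yields overlap $\ge 1-\orderof{2^{-(i+1)}}$ with $\ket{\phi^{t_i}}$, and combining once more with Robustness gives $|\braket{\psi_i}{\Gamma}|^2\ge 1-\orderof{2^{-i}}$. The entanglement rank grows by the multiplicative factor $D_i=(d\ell_i)^{\orderof{\max\{\ell_i/s,\sqrt{\ell_i}\}}}$ of Lemma~\ref{l:agsp}, whose logarithm is at most $\orderof{\ell_i\log d}$; summing over $j\le i$ yields condition~2.

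\textbf{Main obstacle.} The delicate point is calibrating the two sources of error so that the $2^{-i}$ geometric decay in condition~1 is preserved at every step. A single application of $K_i$ only shrinks the orthogonal component by a \emph{constant} factor $\sqrt{\Delta_i}$, so the extra geometric factor needed at step $i$ must come from the \emph{exponential} decay of $\|\ket{\phi^{t_i}}-\ket{\Gamma}\|^2$ in $t_i$ provided by Robustness, which forces $t_i=\Theta(i)$. The joint requirements that $t_i$ be only linear in $i$ and that $K_i$ have constant shrinking factor then dictate $\ell_i=\orderof{\sqrt{(t_i+s)/\eps}}$, and these three constraints must be balanced against the entanglement-rank accumulation quantified in condition~2.
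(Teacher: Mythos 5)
Your proposal follows essentially the same strategy as the paper's proof: construct $\ket{\psi_0}$ via the frustration-free machinery applied to $H^{(t_0)}$ for constant $t_0$, then iterate $\ket{\psi_i}=K_i\ket{\psi_{i-1}}/\|K_i\ket{\psi_{i-1}}\|$ with $K_i=C_{\ell_i}(H^{(t_i)})$, $t_i=\Theta(i)$, and $\ell_i=\orderof{\sqrt{(t_i+s)/\eps}}$, using the Robustness Theorem's $2^{-\Omega(t_i)}$ decay to control convergence and summing the AGSP entanglement costs. The only meaningful difference is bookkeeping: the paper carries the inductive invariant as $\norm{\ket{\phi^{t_i}}-\ket{\psi_i}}^2\le 2^{-i-4}$ (distance to the truncated ground state, which the AGSP directly contracts) and uses $\Delta=1/32$, whereas you state the invariant as overlap with $\ket{\Gamma}$ and take $\Delta_i\le 1/4$; your choice is right at the boundary once the triangle-inequality losses through $\ket{\Gamma}$ are accounted for, so you would need either a slightly smaller $\Delta_i$ or a slightly more aggressive Robustness constant to close the induction cleanly—exactly the calibration you flag in your "Main obstacle" paragraph.
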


{\bf Proof of Lemma:}  

We begin by choosing constants $t_0$, $c$ such that $\Omega(t_0 +
ic) \geq i+4$, for the $\Omega (t)$ appearing in Theorem
\ref{l:gap}.  Setting $t_i= t_0 + ic$, we therefore have 
\begin{equation} \label{e:close}
\norm{\ket{\Gamma}   - \ket{\phi ^{t_i}}}^2 \leq 2^{-(i+4)}
\end{equation} for all $i$.  Similar to the frustration free case, since $t_0$ is constant, choosing $\ell =s^2$ and $s= O(\frac{\log ^2 d}{\eps})$ gives $D\Delta \leq \frac{1}{2}$ in Lemma \ref{l:agsp} and thus by Lemma \ref{lem:mu1} there exists a product state $\ket{\psi}$ such that $|\bra{\Gamma}\ket{\psi}| \leq \frac{1}{\sqrt{2D}}$ where $\log D = O((\frac{log^3 d}{\epsilon})).$   Returning to Lemma \ref{l:agsp}, this time with  $\ell$ chosen so that $\Delta =e^{-\frac{\ell \sqrt{\eps}}{\sqrt{s + 2t_0 - \eps_0}}} = O(\frac{1}{\sqrt{2D}})$, we establish that the state $\ket{\psi_0}= \frac{C_{\ell}(H_{\leq t_0} )\ket{\psi}}{\norm{C_{\ell}(H_{\leq t_0} )\ket{\psi}}}$ has the property that   $\norm{\ket{\phi ^{t_0}} - \ket{\psi_0}}^2 \leq  \frac{1}{16}$, while having entanglement rank $R_0$ with $\log R_0 = O(\frac{\log ^3 d}{\eps}).$ 

We now inductively define $\ket{\psi_i}$ from $\ket{\psi_{i-1}}$
and show that $\norm{ \ket{\phi ^{t_i}} - \ket{\psi_i}}^2 \leq 2^{-i
-4}$.  Applying the triangle inequality to the induction hypothesis
$\norm{ \ket{\phi ^{t_{i-1}}} - \ket{\psi_{i-1}}}^2 \leq 2^{-i -3}$
along with the already established proximity of $ \ket{\phi
^{t_{i-1}}}, \ket{\phi ^{t_i}}$ to $\ket{\Gamma}$ of
(\ref{e:close}), yields
\begin{equation} \label{e:almost}
\norm{ \ket{\phi ^{t_{i}}} - \ket{\psi_{i-1}}}^2 \leq 2^{-i-1}.
\end{equation}  
Our goal is, with only a small amount of added entanglement, to move
$\ket{\psi_{i-1}}$ a little bit closer to $ \ket{\phi ^{t_{i}}}$
which we will accomplish by using a well chosen AGSP.  With $\ell_i
= O(\sqrt{\frac{t_i+ s}{\eps}})$, Lemma \ref{l:agsp} establishes the
existence of a $(D, \frac{1}{32})$ AGSP $K$ for $H^{(t_i)}$ with a
loose bound of $\log D\leq O( \ell_i \log d)$; we apply this AGSP
$K$ to move from $\ket{\psi_{i-1}}$ to $\ket{\psi_i}$ by setting
$\ket{\psi_i}= \frac{ K \ket{\psi_{i-1}}}{\norm{K
\ket{\psi_{i-1}}}}$.  The shrinking property of the AGSP along with
(\ref{e:almost}) establishes $\norm{ \ket{\phi ^{t_i}} -
\ket{\psi_i}}^2 \leq 2^{-i -4}$.

All told we have generated states $\ket{\psi_i}$ with entanglement
rank $R_i = R_0 + \sum_{j\leq i} \ell_i \log d$ and $\norm{
\ket{\phi ^{t_i}} - \ket{\psi_i}}^2 \leq 2^{-i -4}$.  Finally, $
|\braket{\psi_i}{\Gamma}| \geq 1- \norm{ \ket{\Gamma} -
\ket{\psi_i}}^2/2 \geq 1 -2^{-i}$ where the last inequality again
used (\ref{e:close}).


{\bf Proof of Theorem:}

The above lemma gives a series of states of bounded entanglement
rank that converge to the ground state $\ket{\Gamma}$.  Thus if $\{
\lambda_i\}$ are the Schmidt coefficients of $\Gamma$, Lemma
\ref{l:5} \[ \sum_{i=1}^{R_i} \lambda_i ^2 \geq |\bra{\psi_i
}\ket{\Gamma}|^2 \geq 1 - O(2^{-i}). \]

The entropy of $\ket{\Gamma}$ is then upper bounded by summing
$O(2^{-i})\log R_i$ (i.e. the maximal entropy contribution of mass
$O(2^{-i})$ spread over $\log R_i$ terms).  We arrive at a bound of
the entanglement entropy of $\ket{\Gamma}$ given by: \[ \sum_i
O(2^{-i}) (\log R_0 + O(\sum_{j=1}^i \ell_j \log d) )= \log R_0
\sum_ i O(2^{-i}) + \sum_i O(2^{-i}) \sum_{j=1}^i\sqrt{\frac{t_0 +
ic + s}{\eps}} \log d \] \[= O( \log R_0) + O( \sqrt{\frac{t_0
+s}{\eps}}) =O(\frac{\log ^3 (d)}{\eps}).\]

\subsection{Proof of Theorem \ref{thm:Ht}} \label{s:3}

Before giving the proof, we use a simple Markov bound to show that
in a gapped situation, a state with low enough energy must be close
to the ground state:

\begin{lem}[Markov] 
\label{l:mark} 
Let $B$ be a self-adjoint operator with lowest two eigenvalues $\eps
_0 < \eps_1$; denote its lowest eigenvector by $\ket{\psi}$.  Given
a vector $\ket{v}$ with low energy, i.e. such that $\bra{v} H
\ket{v} \leq \eps_0 + \delta $, then $\ket{v}$ is close to
$\ket{\psi}$ in the following sense: 
\begin{align*}
  \norm{\ket{\psi} - \ket{v}}^2 \leq \frac{2\delta}{\eps_1 - \eps_0}
  \ .
\end{align*} 
\end{lem}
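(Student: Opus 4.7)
The plan is to diagonalise $B$, isolate the ground-state component of $\ket{v}$, and convert the energy bound into an overlap bound via the spectral gap. (Note: the statement inadvertently writes $H$ in the inequality $\bra{v}H\ket{v}\le\eps_0+\delta$, which should read $\bra{v}B\ket{v}\le\eps_0+\delta$; I use $B$ throughout.)

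First I would assume $\ket{v}$ is a unit vector — the intended setting, since otherwise $\bra{v}B\ket{v}$ is not naturally an energy — and decompose
\begin{align*}
\ket{v}=\alpha\ket{\psi}+\beta\ket{\psi^\perp},
\end{align*}
where $\ket{\psi^\perp}$ is a unit vector orthogonal to $\ket{\psi}$ and $|\alpha|^2+|\beta|^2=1$. By absorbing a global phase into $\ket{\psi}$ I may take $\alpha\in[0,1]$ real. Since every eigenvalue of $B$ on the orthogonal complement of $\ket{\psi}$ is at least $\eps_1$, the variational estimate gives
\begin{align*}
\bra{v}B\ket{v}\;\ge\;\alpha^2\eps_0+|\beta|^2\eps_1\;=\;\eps_0+|\beta|^2(\eps_1-\eps_0).
\end{align*}
Combining this with the hypothesis $\bra{v}B\ket{v}\le\eps_0+\delta$ yields $|\beta|^2\le\delta/(\eps_1-\eps_0)$, which is the ``Markov-style'' step the lemma's name hints at.

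To finish, with $\braket{\psi}{v}=\alpha\in[0,1]$ I compute
\begin{align*}
\norm{\ket{\psi}-\ket{v}}^2\;=\;2-2\alpha\;\le\;2(1-\alpha^2)\;=\;2|\beta|^2\;\le\;\frac{2\delta}{\eps_1-\eps_0},
\end{align*}
where the middle inequality uses $1-\alpha\le 1-\alpha^2$ for $\alpha\in[0,1]$. There is no serious obstacle; the only bookkeeping points are the implicit normalisation of $\ket{v}$ and the choice of phase on $\ket{\psi}$ that realises the minimum of $\norm{\ket{\psi}-\ket{v}}$ over unit phases, both of which the lemma statement is implicitly assuming.
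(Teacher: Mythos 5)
Your proof is correct and takes essentially the same route as the paper: decompose $\ket{v}$ into its component along $\ket{\psi}$ and an orthogonal remainder, lower-bound the energy via the spectral gap to control the orthogonal weight, and then use $2-2\alpha\le 2(1-\alpha^2)$ to finish. The only difference is that you make explicit the phase normalization and the unit-vector assumption, which the paper leaves implicit.
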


\begin{proof}
  Write $\ket{v} = a\ket{\psi} + \sqrt{1-a^2} \ket{\psi ^{\perp}}$
  where $\ket{\psi^{\perp}}$ is orthogonal to $\ket{\psi}$.  The
  energy of $\ket{v}$ then satisfies 
  \begin{align*}
    a^2\eps_0 +(1- a^2) \eps_1 \leq \bra{v} H \ket{v} \leq \eps_0 + \delta , 
  \end{align*}
  and thus $(1-a^2) \leq \frac{\delta}{\eps_1 - \eps_0}$. The
  result follows from noting that $\norm{\ket{\psi} - \ket{v}}^2 =
  (1-a)^2 + (1-a^2)= 2-2a\leq 2(1-a^2)$.
\end{proof}

\begin{proof}[\ of \Thm{thm:Ht}]

  Define $A$ to be the sum of the two terms $A= H_2 + H_{s-1}$. 
  Notice that the operators:
  \begin{align*}
    \{(\Ht -A), H^L + H_1, H_s + H^R, H-A: t\geq 0 \}
  \end{align*}
  all commute with each other and therefore all the operators are
  simultaneously diagonalizable, i.e. they have a common collection
  of eigenstates.  We fix $P_t$ to be the projection onto the
  subspace spanned by those eigenstates of $H-A$ with eigenvalues
  less than $t$. This collection of eigenstates clearly have
  eigenvalues less than $t$ for the operators $H^L+H_1$ and $H_s +
  H^R$ and therefore $H^L+ H_1= (H^L +H_1)^{\leq t}$ and $H_s
  +H^R=(H_s + H^R)^{\leq t}$ on the range of $P_t$.  This allows the
  important observation that 
  \begin{equation} 
  \label{e:1}
    \Ht P_t= H P_t.
  \end{equation}  

  \noindent {\bf Proving part \ref{enum:1}}:

  \znote{ $v_t$, $v_h$ vs $v_{-}, v_{+}$ ?}

  The main idea is to consider the normalized projection of the two
  lowest eigenvectors of $\Ht$ onto the lower part of the spectrum
  of $H-A$ using $P_t$.  Under the assumption that $t$ is
  sufficiently large and the gap of $\Ht$ is sufficiently small
  (relative to the gap of $H$), we show contradictory facts about
  these normalized projections: that they are simultaneously far
  apart from each other (because applying $P_t$ did not move either
  very much) and close to $\ket{\Gamma}$ (because they both have
  energy with respect to $H$ that is close to $\eps_0$).  This
  contradiction allows us to conclude that for $t$ sufficiently
  large, the gap of $\Ht$ must be of the order of the gap of $H$. 

  For every normalized state $\ket{v}$, we define $\ket{v_t} = P_t
  \ket{v}$ and $\ket{v_h}= (1-P_t) \ket{v}$.  Note that by
  (\ref{e:1}),
  \begin{equation} 
    \label{e:2}
    \bra{v_t}H\ket{v_t} = \bra{v_t}\Ht\ket{v_t} 
  \end{equation}
  Our main technical tool is the following lemma that connects the
  energy of a state $\ket{v}$ to that of $\ket{v_t}$:

  \begin{lem} \label{l:1} For any  state $\ket{v}$, 
    \begin{enumerate}
      \item $\norm{\ket{v_h}} \leq 
        \sqrt {\frac{ \bra{v } \Ht \ket{v }}{t}}$,
      \item $\bra{v_t} H \ket{v_t} \leq 
        \bra{v} H \ket{v}  +O(\sqrt{\frac{\bra{v} \Ht \ket{v} }{t}})$.
    \end{enumerate}
  \end{lem}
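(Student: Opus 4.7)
The plan is to exploit the commuting-operator structure already set up: $H-A$, $\Ht-A$, $H^L+H_1$, $H_3+\cdots+H_{s-2}$, and $H_s+H^R$ all commute because the three sums act on disjoint blocks of particles $[1,m+2]$, $[m+3,m+s-1]$, $[m+s,n]$. The projection $P_t$, being a spectral projection of $H-A$, is block diagonal in the common eigenbasis, as is $\Ht-A$; the piece $A=H_2+H_{s-1}$, which bridges adjacent blocks, is the only part of $\Ht$ that fails to preserve the $P_t$-decomposition, and it satisfies $0\le A\le 2\,\Id$.

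For part 1, I would first show that $\Ht-A\ge t(\Id-P_t)$. In the common eigenbasis, each eigenvector carries a triple $(\alpha,\beta,\gamma)$ of non-negative eigenvalues of $H^L+H_1$, $H_3+\cdots+H_{s-2}$, $H_s+H^R$; the eigenvalue of $H-A$ is $\alpha+\beta+\gamma$, while that of $\Ht-A$ is $\min(\alpha,t)+\beta+\min(\gamma,t)$. Any eigenvector in the range of $\Id-P_t$ has $\alpha+\beta+\gamma\ge t$, and a short case analysis on whether $\alpha\ge t$ or $\gamma\ge t$ shows $\min(\alpha,t)+\beta+\min(\gamma,t)\ge t$ in all cases. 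Since $\Ht-A$ commutes with $P_t$, splitting $\ket{v}=\ket{v_t}+\ket{v_h}$ gives $\bra{v}(\Ht-A)\ket{v}=\bra{v_t}(\Ht-A)\ket{v_t}+\bra{v_h}(\Ht-A)\ket{v_h}\ge t\,\norm{\ket{v_h}}^2$, and $A\ge 0$ upgrades this to $\bra{v}\Ht\ket{v}\ge t\,\norm{\ket{v_h}}^2$, which is the claim.

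For part 2, I would use the same splitting but now expand $\bra{v}\Ht\ket{v}$ in full. Because $\Ht-A$ is block diagonal with respect to $P_t$, the cross terms of $\Ht$ come entirely from $A$, i.e.\ $\bra{v_t}\Ht\ket{v_h}=\bra{v_t}A\ket{v_h}$, and the diagonal pieces $\bra{v_t}\Ht\ket{v_t}$ and $\bra{v_h}\Ht\ket{v_h}$ are non-negative. Rearranging yields $\bra{v_t}\Ht\ket{v_t}\le\bra{v}\Ht\ket{v}+2\,\abs{\bra{v_t}A\ket{v_h}}\le\bra{v}\Ht\ket{v}+2\,\norm{A}\,\norm{\ket{v_h}}\le\bra{v}\Ht\ket{v}+4\,\norm{\ket{v_h}}$. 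Applying (\ref{e:1}) to identify $\bra{v_t}\Ht\ket{v_t}$ with $\bra{v_t}H\ket{v_t}$ on the left, the inequality $\Ht\le H$ on the right, and part 1 to bound $\norm{\ket{v_h}}\le\sqrt{\bra{v}\Ht\ket{v}/t}$, delivers the stated inequality.

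The main obstacle is the case analysis confirming $\Ht-A\ge t$ on the range of $\Id-P_t$: the truncation could in principle destroy a large $H^L+H_1$ or $H_s+H^R$ eigenvalue, but since the replacement value is exactly $t$ and the remaining summands are non-negative, the lower bound survives. Once that geometric point is settled, everything reduces to clean bookkeeping of which pieces of $\Ht$ commute with $P_t$ and which do not.
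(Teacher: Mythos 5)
Your proof is correct and follows essentially the same route as the paper: split with $P_t$, use that $\Ht-A$ commutes with $P_t$ to isolate the cross term as $\bra{v_t}A\ket{v_h}$, bound it via Cauchy--Schwarz and part~1, and pass to $H$ via Eq.~(\ref{e:1}) and $\Ht\le H$. Your explicit verification that $\min(\alpha,t)+\beta+\min(\gamma,t)\ge t$ on the range of $\Id-P_t$ is a worthwhile addition, since the paper's appeal to ``the definition of $P_t$'' (stated for $H-A$ rather than $\Ht-A$) quietly skips this point.
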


  \begin{proof}
    The first result follows from
    \begin{align*}
      \bra{v} \Ht \ket{v} \geq \bra{v} (\Ht - A)\ket{v} 
      = \bra{v_t}(\Ht - A)\ket{v_t} +  \bra{v_h} (\Ht - A)\ket{v_h}   \geq
      \bra{v_h} (\Ht - A)\ket{v_h}  \geq t \norm{v_h}^2 \ ,
    \end{align*}
    the last inequality by the definition of $P_t$.

    For the second result,
    \begin{align*}
      \bra{v_t} H \ket{v_t}= \bra{v_t} \Ht \ket{v_t} 
      \leq \bra{v} \Ht \ket{v }  + 2|\bra{v_t} \Ht \ket{v_h}|,
    \end{align*}
    the first equality from (\ref{e:1}) and the second inequality
    from writing $\ket{v_t} = \ket{v} - \ket{v_h}$ and expanding. We
    now bound the second term on the right hand side.  Notice that
    $\bra{v_t} \Ht \ket{v_h}= \bra{v_t} \Ht-A \ket{v_h} +\bra{v_t}
    A \ket{v_h}= \bra{v_t} A \ket{v_h}$.  To $|\bra{v_t} A \ket{v
   _h}|^2$ we apply Cauchy-Schwartz to get $|\bra{v_t} A \ket{v
   _h}|^2 \leq |A \ket{v_t}|^2 |\ket{v_h}|^2 \leq 2
    \sqrt{\frac{\bra{v} \Ht \ket{v} }{t}}$.  
\end{proof}

To prove part \ref{enum:1}, assume $\eps_1' - \eps_0 '\leq
\frac{1}{10} (\eps_1 - \eps_0)$ and denote by $\ket{\phi ^1}$ the
eigenvector of $\Ht$ with eigenvalue $\eps_1'$.  Write $\ket{\phi
_t}= P_t \ket{\phi}$, $\ket{\phi^1_t}= P_t \ket{\phi^1}$. 
Lemma~\ref{l:1} establishes 
\begin{align*} 
  \bra{\phi_t} H \ket{\phi_t} 
    \leq \eps_0 + O(\sqrt{\frac{\eps_0}{t}}) \ ,
\end{align*}
\begin{align*} 
  \bra{\phi^1_t} H \ket{\phi^1_t} 
    \leq \eps_0 + \frac{1}{10} (\eps_1 - \eps_0)  
      + O(\sqrt{\frac{\eps_0 
        + \frac{1}{10} (\eps_1 - \eps_0)}{t}}) .
\end{align*}

Setting $\ket{v} = \frac{\ket{\phi_t}}{\norm{\ket{\phi_t}}},$
$\ket{v'} = \frac{\ket{\phi^1_t}}{\norm{\ket{\phi^1_t}}},$ and
using the above in Lemma \ref{l:mark} yields \begin{align*}\norm{
\ket{\Gamma} - \ket{v}}^2 \leq O(1)\frac{\sqrt{\eps_0}}{\eps_1- \eps
_0} \frac{1}{\sqrt{t}},\end{align*} \begin{align*} \norm{\ket{
\Gamma} -\ket{v'}}^2 \leq \frac{1}{10} + O(1)\frac{ \sqrt{\eps_0 +
\frac{1}{10}(\eps_1-\eps_0)}}{\eps_1 -\eps_0}
\frac{1}{\sqrt{t}}.\end{align*} This establishes, for sufficiently
large $t= \orderof{\frac{\eps_0 +
\frac{1}{10}(\eps_1-\eps_0)}{(\epsilon_1-\epsilon_0)^2}}$, that
$\ket{v}$ and $\ket{v'}$ are both near $\ket{\Gamma}$ contradicting
the fact that they are also almost orthogonal. 
\end{proof}

\noindent {\bf Proving part \ref{enum:2}:}

We are interested in showing that the ground states of $H$ and $\Ht$
are very close together.  Clearly, the ground states of the nearby
Hamiltonians $H-A$ and $\Ht-A$ are identical since they only differ
among the eigenvectors with values above $t$ and so the question
becomes how much the addition of $A$ can change things.  This
reduces to how much the operator $A$ mixes the low and high spectral
subspaces of $H-A$ (i.e. how big the off-diagonal contribution of
$A$ is when it is viewed in a basis that diagonalizes $H-A$).  The
core component of the argument will be the Truncation Lemma (Lemma
\ref{l:smalltail}): that the ground state $\ket{\Gamma}$ is
exponentially close to the range of $P_t$ (i.e.  the low spectral
subspace of $H-A$).  We will combine this result with the fact that
$H$ and $\Ht$ are identical on the range of $P_t$ to argue that
ground states for $H$ and $\Ht$ are exponentially close.

The following lemma captures the bounds necessary for proving the
Truncation Lemma.
  \begin{lem} 
  \label{l:combounds} 
    With $H$, $P_t$, $\ket{\Gamma}$, $\eps_0$ as above we have the
    following:
    \begin{enumerate}
      \item $\norm{(1-P_t)\ket{\Gamma}}^2 
        \leq \frac{2 |\bra{\Gamma} (1-P_t)AP_t \ket{\Gamma}|}{t-\eps_0}$,
      \item  For $t\geq u$, $\norm{(1-P_t) H P_u} 
        = \norm{(1-P_t) A P_u} \leq 2 e^{-\frac{t-u}{8}}.$
    \end{enumerate}
  \end{lem}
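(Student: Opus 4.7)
The plan is to handle the two parts of Lemma \ref{l:combounds} separately. Throughout I set $M = H - A$, which commutes with both $P_t$ and $P_u$ by construction.

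For part 1, I will contract the eigenvalue equation $H\ket{\Gamma} = \eps_0 \ket{\Gamma}$ with $\bra{\Gamma}(1-P_t)$ from the left to get $\eps_0\norm{(1-P_t)\ket{\Gamma}}^2 = \bra{\Gamma}(1-P_t)H\ket{\Gamma}$. Splitting $H = M + A$, the $M$-piece simplifies via $[M,P_t]=0$ and $(1-P_t)P_t = 0$ to $\bra{\Gamma}(1-P_t)M(1-P_t)\ket{\Gamma} \ge t\norm{(1-P_t)\ket{\Gamma}}^2$, since $M \ge t$ on the range of $1-P_t$. Writing the $A$-piece as $\bra{\Gamma}(1-P_t)AP_t\ket{\Gamma} + \bra{\Gamma}(1-P_t)A(1-P_t)\ket{\Gamma}$ and dropping the non-negative second term (using that $A = H_2 + H_{s-1} \ge 0$), rearranging yields $(t-\eps_0)\norm{(1-P_t)\ket{\Gamma}}^2 \le |\bra{\Gamma}(1-P_t)AP_t\ket{\Gamma}|$, implying the claim.

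For part 2, the identity $\norm{(1-P_t)HP_u} = \norm{(1-P_t)AP_u}$ follows directly from $H = M + A$ and $(1-P_t)MP_u = M(1-P_t)P_u = 0$, where the last step uses $t \ge u$ so $P_u \le P_t$. For the exponential bound, my plan is an imaginary-time sandwich: for $s > 0$,
\[
\norm{(1-P_t)AP_u} \le \norm{(1-P_t)e^{-sM/2}}\cdot\norm{e^{sM/2}Ae^{-sM/2}}\cdot\norm{e^{sM/2}P_u} \le e^{-s(t-u)/2}\norm{e^{sM/2}Ae^{-sM/2}},
\]
where the outer factors use only that $M$ commutes with the projections and that $M \ge t$ on the range of $1-P_t$ while $M < u$ on the range of $P_u$.

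The main obstacle is controlling the middle factor $\norm{e^{sM/2}Ae^{-sM/2}}$ uniformly in the possibly $\Theta(n)$ norms of $H_L$ and $H_R$. The key observation is locality: although $H_L$ and $H_R$ have large norm, each is a sum of bounded-norm local terms $H_i'$, and any commutator $[H_L, X]$ or $[H_R, X]$ only picks up those local terms whose 2-site support actually overlaps the support of $X$. Consequently, the iterated commutator $\operatorname{ad}_M^k(A)$, starting from $A = H_2 + H_{s-1}$ (whose support has size $4$), only involves $O(k)$ bounded-norm local terms after $k$ iterations, with its support growing by at most one site per step on each side. A straightforward combinatorial count then gives $\norm{\operatorname{ad}_M^k(A)} \le \norm{A}\, C^k\, k!$ for a constant $C$ depending only on the local degree of $H$; summing the Taylor series of $e^{sM/2}Ae^{-sM/2}$ yields a bound of $2/(1-sC/2)$ for $sC < 2$, and choosing $s$ of order $1/C$ produces the target bound $2\,e^{-(t-u)/8}$.
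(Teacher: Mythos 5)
Your proof is correct and follows essentially the same route as the paper's: part~1 is the same energy-splitting argument (the paper expands $\bra{\Gamma}H\ket{\Gamma}$ over all four $P_t,(1-P_t)$ blocks, while you contract with $\bra{\Gamma}(1-P_t)$ directly, which tidily eliminates one cross term and even sharpens the stated constant from $2$ to $1$), and part~2 is the same imaginary-time conjugation $e^{-rM}\cdot e^{rM}Ae^{-rM}\cdot$ combined with the Hadamard/BCH expansion and a locality count on nested commutators (the paper's bound $\norm{Q_i}\le 2\cdot 4^i\, i!$ is precisely your $\norm{A}\,C^k k!$ with $C=4$).
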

\begin{proof}

For 1., by definition, 
\begin{align*}
  \eps= \bra{\Gamma}H\ket{\Gamma}
    = \bra{\Gamma}P_tHP_t\ket{\Gamma} 
      + \bra{\Gamma} (1-P_t) H (1- P_t)\ket{\Gamma} 
      + \bra{\Gamma} P_tH(1-P_t)\ket{\Gamma} 
      + \bra{\Gamma}(1-P_t) H P_t\ket{\Gamma}.
\end{align*}

This gives the bound
\begin{align*} 
  \eps \geq \eps \norm{P_t\ket{\Gamma}}^2 + t
  \norm{(1-P_t)\ket{\Gamma}}^2 - 2| \bra{\Gamma}(1-P_t) A
  P_t\ket{\Gamma}|, 
\end{align*} 
where the second term of the right hand side follows from the
inequality $\bra{\Gamma} (1-P_t) H (1-
P_t)\ket{\Gamma}\leq\bra{\Gamma} (1-P_t) (H-A) (1-
P_t)\ket{\Gamma}$, and the replacement of $H$ with $A$ in the third
term follows from $(1-P_t)(H-A)P_t=(1-P_t)P_t(H-A) = 0$.  Writing
$\norm{P_t\ket{\Gamma}}^2= 1 - \norm{(1-P_t)\ket{\Gamma}}^2$ and
rearranging terms yields statement 1.

For statement 2., the first inequality follows simply from writing
$H= (H-A) +A$ and noting that $H-A$ commutes with $P_u$.  We write
$(1-P_t) AP_u= (1-P_t)e^{- r (H-A)}e^{r(H-A)}Ae^{-r(H-A)} e^{r(H-A)}
P_u$, for an $r>0$ to be chosen later, and noting therefore that

\begin{align*} 
  \norm{(1-P_t) AP_u} 
    \le \norm{(1-P_t) e^{-r(H-A)}}
      \cdot\norm{e^{r(H-A)}Ae^{-r(H-A)} }
    \le  e^{-r(t-u)}\norm{e^{r(H-A)}Ae^{-r(H-A)}} \ .
\end{align*}

The Hadamard Lemma gives the expansion
\begin{align} 
  \label{e:hada} 
    e^{r(H-A)}Ae^{-r(H-A)} &= A + r[H-A,A] 
      +\frac{r^2}{2!} [H-A,[H-A,A]]  \\
      &+ \frac{r^3}{3!} [H-A,[H-A,[H-A,A]]]
     + \cdots \nonumber \\
   &= Q_0 + rQ_1 + \frac{r^2}{2!} Q_2 
     + \frac{r^3}{3!}Q_3 + \cdots \ ,
\nonumber
\end{align}
and we turn to bounding the norm of these operators $Q_i$.  

If we expand $H-A$ and $A$ as the sum of its constituent local terms
$H_j$, each $Q_i$ can be written as a sum of $n_{i}$ terms, each a
product of $H_j$'s; we now bound $n_i$.  Notice that $n_0=2$ and
that $Q_{i-1}$ consists of terms, each of which is a product of at
most $i$ $H_j$'s.  For such a product, there are at most $2 i$ terms
in $H-A$ that do not commute with it.  This implies the recursive
bound $n_{i} \leq 4 i n_{i-1}$ and thus $n_{i} \leq 2\cdot 4^{i} i
!$.  Since each of the terms is norm bounded by $1$, we have
$\norm{Q_i} \leq 2 \cdot 4^i i !$.  Plugging this bound into
(\ref{e:hada}) we have $ e^{r(H-A)}Ae^{-r(H-A)}\le 2\sum_i(4r)^i$;
choosing $r=\frac{1}{8}$ gives a bound of $4$ for (\ref{e:hada}) and
establishes statement 2.
\end{proof}

\begin{lem}[Truncation Lemma] 
\label{l:smalltail}
  For $t > 17$, 
  \begin{equation} 
  \label{e:tl}
    \norm{  ( 1- P_t) \ket{\Gamma}} \leq 2^ {-\Omega (t)}.
  \end{equation}
\end{lem}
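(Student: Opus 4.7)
The plan is to treat $\|(1-P_t)\ket{\Gamma}\|^2$ as a Chernoff-style tail of the $(H-A)$-energy distribution in the state $\ket{\Gamma}$, and to bound the associated moment generating function via a first-order differential inequality. Concretely, for every $r \ge 0$ the pointwise inequality $\mathbbm{1}[E>t] \le e^{2r(E-t)}$ on each eigenspace of $H-A$ lifts to the operator inequality $(1-P_t) \le e^{-2rt}\,e^{2r(H-A)}$, which when applied in the state $\ket{\Gamma}$ gives
\[
  \|(1-P_t)\ket{\Gamma}\|^2 \;\le\; e^{-2rt}\,h(r), \qquad h(r) := \bra{\Gamma}e^{2r(H-A)}\ket{\Gamma}.
\]
It therefore suffices to show that $h(r)$ is bounded by $e^{O(r)}$ for some fixed $r$ below a universal threshold. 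A purely iterative argument based on splitting $P_t = P_u + (P_t - P_u)$ and applying Lemma~\ref{l:combounds} alone only seems to yield subexponential decay, so this MGF route is what extracts a genuine $e^{-\Omega(t)}$ rate from the same Hadamard-type input already set up in the excerpt.

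To bound $h$, I would differentiate in $r$. Because $H-A$ commutes with $e^{2r(H-A)}$, and $H\ket{\Gamma}=\eps_0\ket{\Gamma}$ implies $(H-A)\ket{\Gamma} = (\eps_0-A)\ket{\Gamma}$,
\[
  h'(r) \;=\; 2\bra{\Gamma}\,e^{2r(H-A)}(H-A)\ket{\Gamma} \;=\; 2\eps_0\,h(r) \;-\; 2\,\bra{\Gamma}e^{2r(H-A)}A\ket{\Gamma}.
\]
Writing $e^{2r(H-A)} = e^{r(H-A)}\cdot e^{r(H-A)}$ and applying Cauchy--Schwarz reduces the cross term to $\|e^{r(H-A)}Ae^{-r(H-A)}\| \cdot h(r)$, and the operator-norm factor is exactly what the Hadamard expansion inside the proof of Lemma~\ref{l:combounds} part~2 controls: summing the nested-commutator bounds $\|Q_k\|\le 2\cdot 4^k k!$ from that proof gives $\|e^{r(H-A)}Ae^{-r(H-A)}\| \le 2/(1-4r)$ for $r < 1/4$. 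The resulting linear differential inequality $h'(r) \le [2\eps_0 + 4/(1-4r)]\,h(r)$ with $h(0)=1$ integrates to $h(r) \le e^{2\eps_0 r}/(1-4r)$.

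Plugging back into the reduction gives $\|(1-P_t)\ket{\Gamma}\|^2 \le e^{-2r(t-\eps_0)}/(1-4r)$, and choosing $r = 1/8$ together with $\eps_0 \le 6$ (from the construction of $H$ in the excerpt) yields $\|(1-P_t)\ket{\Gamma}\| \le \sqrt{2}\,e^{-(t-\eps_0)/8}$, which is $2^{-\Omega(t)}$ for $t > 17$ as required. The only delicate step will be the ODE bound on $\bra{\Gamma}e^{2r(H-A)}A\ket{\Gamma}$: the non-commutation of $A$ and $H-A$ blocks any naive estimate, and it is precisely the Hadamard-based operator-norm estimate recycled from Lemma~\ref{l:combounds} that circumvents this obstacle.
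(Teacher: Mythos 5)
Your proof is correct and takes a genuinely different route from the paper's. The paper proves the discrete version $\norm{(1-P_{s+nd})\ket{\Gamma}}\le 2^{-n}$ by induction on $n$: it combines Lemma~\ref{l:combounds} part~1, the decomposition $P_{s+n_0 d}=\sum_{j}P_{[j]}$ into spectral windows, the inductive hypothesis for each window, and the off-diagonal decay $\norm{(1-P_t)AP_u}\le 2e^{-(t-u)/8}$ from part~2, closing the induction by a geometric-series summation with a fixed spacing $d=16\ln 2$. You instead bound the tail by a single Chernoff-style operator inequality $(1-P_t)\le e^{-2rt}e^{2r(H-A)}$, then control the exponential moment $h(r)=\bra{\Gamma}e^{2r(H-A)}\ket{\Gamma}$ via the differential inequality $h'(r)\le\bigl[2\eps_0+2\norm{e^{r(H-A)}Ae^{-r(H-A)}}\bigr]h(r)$ and Gronwall. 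The two arguments share exactly one ingredient — the Hadamard nested-commutator estimate $\norm{Q_k}\le 2\cdot 4^k k!$ giving $\norm{e^{r(H-A)}Ae^{-r(H-A)}}\le 2/(1-4r)$ — but you bypass Lemma~\ref{l:combounds} part~1 and the discrete induction entirely, which makes your version shorter and arguably cleaner. One small correction: your remark that ``a purely iterative argument based on splitting $P_t=P_u+(P_t-P_u)$ only seems to yield subexponential decay'' is not right — the paper's induction \emph{does} yield genuine exponential decay, precisely because the spacing $d$ is kept fixed (so the per-step loss $e^{-d/8}=1/4$ dominates the per-step gain factor $2$ in the windowed sum). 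So your MGF route is a simplification, not a necessity. A minor technical caveat that applies equally to both proofs: $H-A$ contains the unbounded blocks $H_L$ and $H_R$, so one should verify that the nested commutators $Q_k$ stay confined to the norm-$1$ middle terms (which they do, starting from $A=H_2+H_{s-1}$, except possibly at the $H_s$/$H_R$ interface) and that $h(r)$ is finite for $r<1/4$; the paper glosses over the same points, so this is not a defect specific to your argument.
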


\begin{proof}
  We show a discrete version of (\ref{e:tl}): that there exist 
  constants $s=16 + \eps_0$ and $d=16$ such that for integers $n\geq
  0$
  \begin{equation} 
  \label{e:discretetail}
    \norm{(1-P_{s+ nd})\ket{\Gamma}} \leq 2^{-n}
  \end{equation}
  The result will then follow since $s + nd \leq t \leq s+ (n+1)d$
  implies 
  \begin{align*} 
    \norm{(1-P_{t})\ket{\Gamma}} \leq \norm{(1-P_{s+
    nd})\ket{\Gamma}} \leq 2^{-n} \leq 2^{-(\frac{t-s}{d}-1)} = 2^{-
    \Omega (t)}. 
  \end{align*}

  To prove (\ref{e:discretetail}), we will proceed by induction. 
  Clearly the initial case of $n=0$ holds.  Assume that
  $\norm{(1-P_{s+nd})\ket{\Gamma}} \leq 2^{-n},$ for $n<n_0$.
  Define $P_{[0]}=P_s$ and $P_{[j]}= P_{s+jd} - P_{s+ (j-1)d}$ for
  $1\leq j \leq n_0$; thus $P_{s+n_0d}= \sum_{j=0}^{n_0} P_{[j]}$
  and the induction hypothesis implies
  \begin{equation} 
    \label{e:easy}
    \norm{P_{[j]}\ket{\Gamma}}\leq 2^{-j+1}.
  \end{equation}
  for $j< n_0$. By Lemma \ref{l:combounds}, 
  \begin{align*}
   \norm{(1-P_{s+ n_0 d})\ket{\Gamma}}^2\leq \frac{2 | \bra{\Gamma}
   (1-P_{s+n_0d})A P_{s+n_0d} \ket{\Gamma}|} {s+ n_0d-\eps_0}. 
  \end{align*}
  Our goal is to bound the numerator of the right hand side by
  $16\cdot 2^{-2n_0}$; (\ref{e:discretetail}) then follows since the
  denominator is at least $16$.  Write
  \begin{align*} 
    | \bra{\Gamma} (1-P_{s+n_0d})A P_{s+n_0d} \ket{\Gamma}|=
    |\sum_{j=0}^{n_0} \bra{\Gamma} (1-P_{s+n_0d})A
    P_{[j]}\ket{\Gamma}| 
  \end{align*}
  \begin{align*} 
    \leq \sum_{j=0}^{n_0} \norm{ (1-P_{s+n_0d})A P_{[j]}} \norm{(1-
    P_{s +n_0d}) \ket{\Gamma} }\norm{P_{[j]}\ket{\Gamma}} \leq 
    \sum_{j=0}^{n_0} \norm{ (1-P_{s+n_0d})A P_{[j]}} 2^{-j - n_0
    +2}, 
  \end{align*} 
  where the last equation used (\ref{e:easy}) and the fact that
  $\norm{(1- P_{s +n_0d}) \ket{\Gamma} } \leq \norm{(1- P_{s
  +(n_0-1)d}) \ket{\Gamma} }\leq 2^{-(n_0 -1)}.$ Applying Lemma
  \ref{l:combounds} to bound the first term in the sum on the right
  hand side yields:
  \begin{align*} 
    |\bra{\Gamma} (1-P_{s+n_0d})A P_{s+n_0d} \ket{\Gamma}| \leq
    \sum_{j=0}^{n_0} e^{-\frac{(n_0-j)d}{8}} 2^{-(n_0 +j-3)} =
    2^{-2n_0} (8)\sum_{j=0}^{n_0} 2^{ (n_0 -j) (1- d \frac{1}{8\ln
    2})} .
  \end{align*}
  With the choice of $d=16\ln2$, the sum in the
  last term on the right hand side is a geometric series that is
  bounded by $2$ which yields the desired bound of $16 \cdot
  2^{-2n_0}$ and completes the proof of (\ref{e:discretetail}).
\end{proof}

We now use the Truncation Lemma to show that the the projected state
$P_t\ket{\Gamma}$ is exponentially close to eigenvectors of both $H$
and $\Ht$.
\begin{lem}
  The state $\ket{\Gamma_t}
  =\frac{1}{\norm{P_t\ket{\Gamma}}}P_t\ket{\Gamma}$ is an
  approximate eigenvector of both $H$ and $\Ht$ in the following
  sense:
  \begin{equation} 
  \label{e:approxev} \norm{H\ket{\Gamma_t} - \eps_0
    \ket{\Gamma_t}}= \norm{\Ht\ket{\Gamma_t} - \eps_0
    \ket{\Gamma_t}} \leq 2^{-\Omega (t)}.
  \end{equation}
\end{lem}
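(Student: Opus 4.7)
The plan is to express $(H-\eps_0)\ket{\Gamma_t}$ as a commutator $[A,P_t]\ket{\Gamma}$ and then control it with the Truncation Lemma together with Lemma~\ref{l:combounds}. First, I would invoke equation~(\ref{e:1}), $HP_t=\Ht P_t$, which immediately gives $H\ket{\Gamma_t}=\Ht\ket{\Gamma_t}$, making the first equality in~(\ref{e:approxev}) literal. Then, using $H\ket{\Gamma}=\eps_0\ket{\Gamma}$ and the fact that $P_t$ commutes with $H-A$ (being a spectral projection of it), the quantity to bound becomes
\begin{align*}
(HP_t-\eps_0 P_t)\ket{\Gamma}=[H,P_t]\ket{\Gamma}=[A,P_t]\ket{\Gamma}.
\end{align*}
After normalizing by $\norm{P_t\ket{\Gamma}}\ge\sqrt{1-2^{-\Omega(t)}}$, which is at least $1/2$ by the Truncation Lemma, the task reduces to proving $\norm{[A,P_t]\ket{\Gamma}}\le 2^{-\Omega(t)}$.

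Next I would split $[A,P_t]=(1-P_t)AP_t-P_tA(1-P_t)$ and treat the two pieces separately. The high-to-low piece $P_tA(1-P_t)\ket{\Gamma}$ is immediate: since $\norm{A}\le 2$ (as $A=H_2+H_{s-1}$ with each $\norm{H_i}\le 1$), the Truncation Lemma gives $\norm{P_tA(1-P_t)\ket{\Gamma}}\le 2\cdot 2^{-\Omega(t)}$. The low-to-high piece $(1-P_t)AP_t\ket{\Gamma}$ is the main difficulty, because $\norm{P_t\ket{\Gamma}}\approx 1$ makes a naive operator-norm bound useless.

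To handle this remaining term, I would recycle the band decomposition from the Truncation Lemma's proof. Writing $t=s+n_0 d$ with $d=16\ln 2$ and $P_t=\sum_{j=0}^{n_0}P_{[j]}$, each summand $(1-P_t)AP_{[j]}\ket{\Gamma}$ factors into two exponentially small pieces: Lemma~\ref{l:combounds} part~2 applied with $u=s+jd$ yields $\norm{(1-P_t)AP_{[j]}}\le 2\cdot 2^{-2(n_0-j)}$, while the inductive band bound~(\ref{e:easy}) gives $\norm{P_{[j]}\ket{\Gamma}}\le 2^{-j+1}$. Their product is $O(2^{-2n_0+j})$, and summing the resulting geometric series in $j$ yields $O(2^{-n_0})=2^{-\Omega(t)}$.

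The main obstacle is precisely this low-to-high term: although $\ket{\Gamma}$ sits almost entirely inside $\mathrm{range}(P_t)$, one must further argue that most of its mass is concentrated deep inside the low-energy sector of $H-A$ rather than pressed against the top edge at energy $t$. The band estimate~(\ref{e:easy}) supplies exactly this finer information, and the choice $d=16\ln 2$ equalizes the two exponentials so that their product, summed across bands, still decays like $2^{-\Omega(t)}$.
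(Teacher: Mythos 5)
Your proof is correct, and the commutator framing $[H,P_t]\ket{\Gamma}=[A,P_t]\ket{\Gamma}$ cleanly isolates where the error comes from; the first equality in~(\ref{e:approxev}) is indeed literal via~(\ref{e:1}). The paper, however, takes a structurally different and in one respect lighter route: rather than splitting $[A,P_t]$ into low-to-high and high-to-low pieces, it bounds $\norm{H(1-P_t)\ket{\Gamma}}$ directly by writing $1-P_t$ as the telescoping sum of spectral bands of $H-A$ \emph{above} $t$, namely $\sum_{i\ge 0}(P_{t+i+1}-P_{t+i})$. On each such band $\norm{H(P_{t+i+1}-P_{t+i})}\le t+i+3$ (since $H-A$ is spectrally bounded there and $\norm{A}\le 2$), while the Truncation Lemma gives $\norm{(P_{t+i+1}-P_{t+i})\ket{\Gamma}}\le 2^{-\Omega(t+i)}$; the linear prefactor is swallowed by the exponential decay and the sum converges to $2^{-\Omega(t)}$. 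This invokes the Truncation Lemma only as a black box, whereas your low-to-high term $(1-P_t)AP_t\ket{\Gamma}$ forces you to re-open the Truncation Lemma's proof machinery — the bands \emph{below} $t$, Lemma~\ref{l:combounds} part~2, and the band estimate~(\ref{e:easy}) — and essentially rerun its scalar estimate in vector form. Both deliver $2^{-\Omega(t)}$; your version has the small bookkeeping advantage of producing $\norm{(HP_t-\eps_0 P_t)\ket{\Gamma}}$ directly, so that after dividing by $\norm{P_t\ket{\Gamma}}\ge\sqrt{1-2^{-\Omega(t)}}$ you land exactly on $\norm{H\ket{\Gamma_t}-\eps_0\ket{\Gamma_t}}$, whereas the paper first obtains $\norm{HP_t\ket{\Gamma}-\eps_0\ket{\Gamma}}$ and must absorb the harmless discrepancy between $\ket{\Gamma}$ and $\ket{\Gamma_t}$ at the end.
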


\begin{proof}
  We begin by writing $H(1-P_t) \ket{\Gamma}= \sum_{i=0}^{\infty} H
  ( P_{t+i +1}-P_{t+i} )\ket{\Gamma}$ and thus
  \begin{align*}
    \norm{H(1-P_t) \ket{\Gamma}} 
      \le \sum_{i=0}^{\infty}
        \norm{H(P_{t+i+1} - P_{t+i})}\cdot\norm{P_{t+i} \ket{\Gamma}} 
      \le \sum_{i=0}^{\infty} (t+(i+1) + 2) 2^{-\Omega(t+i)} 
      \le 2^{-\Omega{(t)}} \ ,
  \end{align*}
  where the bound on the first term follows from the fact that
  $H-A\le(t+i+1)\cdot \Id$ on the range of $P_{t+i+1}$ and
  the bound on the second term from the Truncation Lemma.  

  We write $\eps_0 \ket{\Gamma}= H \ket{\Gamma}= H(1-P_t)
  \ket{\Gamma} + HP_t \ket{\Gamma}$.  We have bounded the first term
  on the right hand side by $2^{-\Omega (t)}$ and it follows simply
  that 
  \begin{equation} 
  \label{e:10} \norm{HP_t\ket{\Gamma} - \eps_0 \ket{\Gamma}} =
    \norm{\Ht P_t\ket{\Gamma} - \eps_0 \ket{\Gamma}} \leq 2 ^{-
    \Omega (t)}, 
  \end{equation}
   where the first equality is from (\ref{e:1}).  Multiplying
   (\ref{e:10}) by $\frac{1}{\norm{P_t \ket{\Gamma}} }\leq 
   \frac{1}{\sqrt{1- 2^{-\Omega (t)}}}$, a constant close to one
   yields (\ref{e:approxev}) since the constant can be absorbed into
   $2^{-\Omega (t)}$.
\end{proof}

We outline the remainder of the argument.  The approximate
eigenvalue property of (\ref{e:approxev}) can be used to show that
$\ket{\Gamma_t}$ is close to an eigenvector of $\Ht$ with
eigenvalue in the range $[\eps_0 - 2^{-\Omega{(t)}}, \eps_0
+2^{-\Omega{(t)}}]$.  By combining a lower bound for the ground
energy of $\Ht$ with the fact (part a.) that $\Ht$ has a gap of
reasonable size, we are able to show that the eigenvector of $\Ht$
near $\ket{\Gamma_t}$ is the ground state of $\Ht$.  The proximity
of $\ket{\Gamma_t}$ to the ground states of both $\Ht$ and $H$ then
establishes the result.

We begin by setting $\delta \EqDef 2^{-\Omega (t)}$.  The approximate
eigenvalue property of (\ref{e:approxev}) implies that there is an
eigenvalue of $\Ht$ within $\delta$ of $\eps_0$, for if not, writing
$\ket{\Gamma_t} = \sum_i c_i \ket{v_i}$ where $\ket{v_i}$ are the
eigenvectors of $\Ht$ with eigenvalues $\lambda_i$, $\norm{\Ht
\ket{\Gamma_t} - \eps_0 \Omega_t}^2 = \sum_i (\lambda_i - \eps_0)
c_i^2 > \delta^2 \sum_i c_i^2= \delta^2$ which contradicts
$(\ref{e:approxev}) $.  We now show that there is only one
eigenvalue in this range and that it is in fact the ground energy
for $\Ht$.  This will follow by lower bounding the energy of $\Ht$
by $\eps_0 + \delta - (\eps_1 ' - \eps_0 ')$.

Decompose an arbitrary state $\ket{\psi}= P_t \ket{\psi} + (1-P_t)
\ket{\psi} = \ket{\psi_1} + \ket{\psi_2}$.  Then 
\begin{align*} 
  \bra{\psi}\Ht \ket{\psi} = \bra{\psi}(\Ht -A)\ket{\psi} +
  \bra{\psi}A \ket{\psi}= \bra{\psi_1}(\Ht -A)\ket{\psi_1} +
  \bra{\psi_2}(\Ht -A)\ket{\psi_2}+ \bra{\psi}A \ket{\psi} \geq t
  \norm{\psi_2}^2 . 
\end{align*} 
A $\ket{\psi}$ for which $t \norm{\psi_2}^2 \geq \frac{\eps_0}{t}$
will therefore have energy at least $\eps_0$.  In the remaining case
of $\norm{\ket{\psi_1}}^2 \geq 1- \frac{\eps_0}{t}$ we bound the
energy of $\ket{\psi}$ as follows:
\begin{equation} \label{e:10b}
  \bra{\psi}\Ht \ket{\psi} \geq \bra{\psi_1}\Ht\ket{\psi_1} + 
  \bra{\psi_2}\Ht \ket{\psi_2} -2|\bra{\psi_1}A \ket{\psi_2}| \geq
  \bra{\psi_1}\Ht\ket{\psi_1} + \bra{\psi_2}\Ht \ket{\psi_2} -
  4\sqrt{\frac{\eps_0}{t}}. 
\end{equation}
Since $\ket{\psi_1}$ is in the range of $P_t$, (\ref{e:1}) implies
$\bra{\psi_1}\Ht\ket{\psi_1}=\bra{\psi_1}H\ket{\psi_1} \geq \eps
_0\norm{\psi}^2$.  Combining this bound with (\ref{e:10b}) gives
$\bra{\psi}\Ht \ket{\psi} \geq \eps_0 - O(\frac{\eps_0}{t})$ and
thus we've shown $\eps_0' \geq \eps_0 - O(\frac{\eps_0}{t})$. Since
(from part a.) $\eps_1' - \eps_0' \geq O( \eps_1 - \eps_0)$, a
choice of $t = O(\frac{\eps_0}{(\eps_1 - \eps_0)^2})$ ensures that
the the ground energy of $\Ht$ is at least $\eps_0 + \delta -
(\eps_1'- \eps_0')$.
  
We've established that both ground energy and the energy of
$\ket{\Gamma_t}$ with respect to $\Ht$ is in the interval $[\eps_0 -
\delta, \eps_0 + \delta]$.  Applying Lemma~\ref{l:mark} gives 
\begin{align*}
  \norm{ \ket{\psi}  - \ket{\Gamma_t}} \leq \delta ,
\end{align*} 
and b. then follows from recalling that $\norm{\ket{\Gamma} -
\ket{\Gamma_t} } \leq \delta$ as well.

\section{Sub exponential algorithm for finding the ground energy of
gapped 1D Hamiltonians}

We now show that the ground state can be well-approximated by a MPS
with a sublinear bond dimension, and, consequently, a
$1/\poly(n)$ approximation of its ground energy can be found in a
subexponential time. To simplify the discussion, we shall treat
$d$ as a constant.

As discussed in the frustration free case, to show that
$\ket{\Gamma}$ can be approximated to within $\frac{1}{poly(n)}$
with an MPS of bond dimension $B=\tilde{O}( \exp (\log
^{\frac{3}{4}} n/\eps^{\frac{1}{4}}))$, it suffices to show for each
cut $(i,i+1)$ the existence of a state with entanglement rank $B$
across that cut that is within $\frac{1}{poly(n)}$ of
$\ket{\Gamma}$.  Theorem \ref{thm:Ht} yields that for $t= O(\log
n)$, $\norm{\ket{\phi^{(t)}} -\Gamma}|| \leq \frac{1}{poly(n)}$, and
therefore we turn to finding a state with entanglement rank $B$
across the cut $(i,i+1)$ that approximates $\ket{\phi^{(t)}}$. 
Applying the AGSP $K= C_{\ell} (\Ht)$ of Lemma \ref{l:agsp}, we have
$\Delta= \frac{1}{poly(n)}$ for $\ell = O(\log n \sqrt{\frac{s +
\log n}{\eps}}))$, and $D=\tilde{O}(\exp(\ell/s +s ))$.  The optimal
choice of $s= \log ^{\frac{3}{4}} n/ \eps^{\frac{1}{4}}$ gives the
desired $B=e^{\torderof{\log ^{\frac{3}{4}}n/\eps^{1/4}}}$.

We can now use this result to bound the complexity of actually
finding the ground energy. For simplicity, we treat $d,\eps$ as
constants. Using recent dynamical programing results
\cite{ref:dprog1,ref:dprog2}, we infer that there exists an
algorithm that runs in time $T = (dBn)^{\orderof{B^2}}\le
\exp(e^{\torderof{\log^{3/4}n/\eps^{1/4}}})$ and finds a
$1/\poly(n)$ approximation of the ground energy. Since
$e^{\torderof{\log^{3/4}n/\eps^{1/4}}}$ is smaller than any finite root of $n$,
it follows that
\begin{corol}
\label{cor:notNP}
  Finding a $1/\poly(n)$ approximation to the ground energy of a
  1D, nearest-neighbors Hamiltonian with a constant spectral gap is
  not $\NP$-hard, unless 3-SAT can be solved in a sub-exponential time.
\end{corol}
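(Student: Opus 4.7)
The plan is to derive Corollary~\ref{cor:notNP} directly from the subexponential-time algorithm constructed in the preceding paragraphs, via a standard complexity-theoretic contrapositive. The input to the argument is the deterministic procedure $\mathcal{A}$ that, on any 1D nearest-neighbor Hamiltonian with constant local dimension $d$ and constant spectral gap $\eps$ on $n$ particles, outputs a $1/\poly(n)$ approximation of the ground energy in time $T(n) \le \exp\bigl(e^{\torderof{\log^{3/4} n/\eps^{1/4}}}\bigr)$.

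First I would make precise the key asymptotic claim about $T(n)$. Treating $\eps$ as a constant collapses the exponent inside $\exp(\cdot)$ to $e^{\torderof{\log^{3/4} n}}$. Since $\log^{3/4} n = o(\log n)$ and polylogarithmic factors do not affect this, one obtains
\begin{align*}
  e^{\torderof{\log^{3/4} n}} = n^{o(1)},
\end{align*}
and hence $\log T(n) = n^{o(1)}$. In particular, $T(n) = 2^{n^{o(1)}}$, which is smaller than $2^{n^{\alpha}}$ for every fixed $\alpha > 0$; this is precisely what it means for $\mathcal{A}$ to run in sub-exponential time in the input size $n$.

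Second I would execute the contrapositive. Assume, toward contradiction, that the approximation problem is $\NP$-hard. Then there exists a polynomial-time Karp reduction from $3$-SAT to it: an $N$-variable $3$-SAT instance is mapped to a 1D nearest-neighbor Hamiltonian of the prescribed form on $n=\poly(N)$ particles, for which knowledge of the ground energy to precision $1/\poly(n)$ decides satisfiability. Composing this reduction with $\mathcal{A}$ gives a deterministic decision procedure for $3$-SAT of running time $T(\poly(N)) = 2^{\poly(N)^{o(1)}} = 2^{N^{o(1)}}$, which is sub-exponential in $N$. This contradicts the stated hypothesis that $3$-SAT admits no sub-exponential algorithm, so the corollary follows.

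The main point to verify is more bookkeeping than a genuine obstacle: one must check that the reduction stays within the promise class on which $\mathcal{A}$ is guaranteed to succeed (nearest-neighbor geometry, constant $d$, constant $\eps$, and accuracy required only at the $1/\poly(n)$ level), and that the polynomial blow-up $n=\poly(N)$ is swallowed inside the exponent in the $T(n)$ bound. The former is immediate from what $\NP$-hardness of this particular problem means under polynomial-time Karp reductions, while the latter reduces to the identity $\log \poly(N)=\orderof{\log N}$, under which $e^{\torderof{\log^{3/4} n}}$ remains $N^{o(1)}$ after the substitution $n=\poly(N)$.
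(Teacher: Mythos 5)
Your argument is correct and is exactly the (implicit) argument the paper is invoking: the time bound $T=\exp\bigl(e^{\torderof{\log^{3/4}n/\eps^{1/4}}}\bigr)=2^{n^{o(1)}}$ is sub-exponential, and the standard Karp-reduction contrapositive then converts $\NP$-hardness into a sub-exponential algorithm for $3$-SAT. You have simply spelled out the composition and the polynomial blow-up bookkeeping that the paper leaves to the reader.
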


\section{Acknowledgments}
\label{sec:Acknowledgements}

We are grateful to Dorit Aharonov, Fernando Brandao, and Matt
Hastings for inspiring discussions about the above and related
topics.

\bibliographystyle{ieeetr}

{~}

\bibliography{QC}

\providecommand{\noopsort}[1]{}\providecommand{\singleletter}[1]{#1}%
\begin{thebibliography}{1}

\bibitem{ref:Has07}
M.~B. Hastings, ``An area law for one-dimensional quantum systems,'' {\em
  Journal of Statistical Mechanics: Theory and Experiment}, vol.~2007, no.~08,
  p.~P08024, 2007.

\bibitem{ref:FOCS2011-AreaLaw}
D.~{Aharonov}, I.~{Arad}, Z.~{Landau}, and U.~{Vazirani}, ``The 1d area law and
  the complexity of quantum states: A combinatorial approach,'' in {\em 2011
  IEEE 52st Annual Symposium on Foundations of Computer Science}, pp.~324--333,
  IEEE, 2011.

\bibitem{ref:DL-AreaLaw-2011}
I.~Arad, Z.~Landau, and U.~Vazirani, ``Improved one-dimensional area law for
  frustration-free systems,'' {\em Phys. Rev. B}, vol.~85, p.~195145, May 2012.

\bibitem{ref:AL-rev}
J.~Eisert, M.~Cramer, and M.~B. Plenio, ``Colloquium: Area laws for the
  entanglement entropy,'' {\em Rev. Mod. Phys.}, vol.~82, pp.~277--306, Feb
  2010.

\bibitem{ref:DL2}
D.~Aharonov, I.~Arad, U.~Vazirani, and Z.~Landau, ``The detectability lemma and
  its applications to quantum hamiltonian complexity,'' {\em New Journal of
  Physics}, vol.~13, no.~11, p.~113043, 2011.

\bibitem{ref:DL}
D.~Aharonov, I.~Arad, Z.~Landau, and U.~Vazirani, ``The detectability lemma and
  quantum gap amplification,'' in {\em STOC '09: Proceedings of the 41st annual
  ACM symposium on Theory of computing, \texttt{arXiv:0811.3412}}, (New York,
  NY, USA), pp.~417--426, ACM, 2009.

\end{thebibliography}

\end{document}